\newcommand{\polylog}{\,\textup{polylog}}
\crefname{section}{Sec.}{Secs.}
\Crefname{section}{Section}{Sections}
\newtheorem{thm}{Theorem}
\newtheorem{lem}{Lemma}
\newtheorem{conj}{Conjecture}
\begin{document}

\title{Site-by-site quantum state preparation algorithm\\ for preparing vacua of fermionic lattice field theories}

\author{Ali \surname{Hamed Moosavian}}
\affiliation{Joint Center for Quantum Information and Computer Science, NIST/University of Maryland, College Park, Maryland 20742, USA}
\email{muo@protonmail.ch}

\author{James R. Garrison}
\affiliation{Joint Center for Quantum Information and Computer Science, NIST/University of Maryland, College Park, Maryland 20742, USA}
\affiliation{Joint Quantum Institute, NIST/University of Maryland, College Park, Maryland 20742, USA}
\email{jrgarr@umd.edu}

\author{Stephen P. Jordan}
\affiliation{Microsoft Quantum, Redmond, Washington 98052, USA}
\affiliation{University of Maryland Institute for Advanced Computer Studies, College Park, Maryland 20742, USA}
\email{stephen.jordan@microsoft.com}

\begin{abstract}
Answering whether quantum computers can efficiently simulate quantum field theories has both theoretical and practical motivation. From the theoretical point of view, it answers the question of whether a hypothetical computer that utilizes quantum field theory would be more powerful than other quantum computers. From the practical point of view, when reliable quantum computers are eventually built, these algorithms can help us better understand the underlying physics that govern our world.

In the best known quantum algorithms for simulating quantum field theories, the time scaling is dominated by initial state preparation. In this paper, we exclusively focus on state preparation and present a heuristic algorithm that can prepare the vacuum of fermionic systems in more general cases and more efficiently than previous methods. With our method, state preparation is no longer the bottleneck, as its runtime has the same asymptotic scaling with the desired precision as the remainder of the simulation algorithm. We numerically demonstrate the effectiveness of our proposed method for the 1+1 dimensional Gross-Neveu model.

\end{abstract}

\maketitle

\section{Introduction}

One of the main motivations for building quantum computers is to simulate quantum systems efficiently \cite{Feynman1982}, something that is believed to be computationally hard on classical computers \cite{Feynman1985}. Some scattering problems in Quantum Field Theories (QFTs) are BQP-complete \cite{Jordan2018}, and are thus among the most difficult problems that a quantum computer is able to solve. Thus, generic QFTs cannot be simulated in polynomial time on a classical computer unless, of course, quantum computers are actually no more powerful than classical computers (i.e., BQP=BPP in the language of computational complexity theory).
Although Monte Carlo simulations (e.g., for the lattice quantum chromodynamics) can yield static measures like binding energy, doing real-time dynamics for quarks has proven to be difficult \cite{Gattringer2010}.

There are two main approaches to quantum simulation of quantum systems \cite{Preskill2018,Dalmonte2016,Cirac2012}. One approach is to design a system with many quantum degrees of freedom whose dynamics resemble a certain quantum system we want to study. This is called \emph{Analog Quantum Simulation}. Research in this area has been vibrant in the past decade and a half, and possible quantum systems to embed the simulations include but are not limited to ultra cold atoms \cite{Bloch2012}, ion traps \cite{Blatt2012} and Rydberg atoms \cite{Glaetzle2014}. In particular, there are a number of proposals for simulating lattice gauge theories using ultracold atoms in optical lattices \cite{Gonzalez-Cuadra2017,Kasper2017,Banerjee2012,Rico2018,Zohar2015}. Although some of these proposals for analog quantum simulation are quite promising and have been implemented in labs \cite{Bernien2017,Zhang2017}, they have to be handcrafted for each specific problem, and error analysis poses a challenge. The other approach is to use a universal, general purpose \emph{Digital Quantum Computer} to simulate quantum systems. Starting with the pioneering works of \cite{Feynman1982, Deutsch85, Lloyd1996, AbramsLloyd, Zalka}, quantum algorithms for simulating quantum systems using universal digital quantum simulation have become a well-developed area of study.

The known digital quantum algorithms for calculating scattering amplitudes in QFT consist of at least four distinct subroutines \cite{Jordan2012,Jordan2014,HamedMoosavian2018,Klco2018a, Klco2019}. First, they prepare the vacuum state, either by directly preparing the ground state of the interacting theory \cite{HamedMoosavian2018}, or by first preparing the ground state of the noninteracting theory and then adiabatically turning on the interactions \cite{Jordan2014}. The next step is to prepare incoming particle states by adding oscillating terms to the Hamiltonian that couple the vacuum to the desired excited states. Reference \cite{Jordan2012} actually does these two steps in a slightly different manner, by exciting the particles in the noninteracting theory and then adiabatically turning on the interactions. The third step is to let the particles interact and scatter. This is achieved by using an efficient Hamiltonian simulation algorithm, like the ones introduced in Refs.\ \cite{Low2016a,Haah2018,Berry}. The last step is to measure properties of the outgoing particles, such as their locations or momenta. This can be achieved by either adiabatically tuning back to the noninteracting theory or by measuring some local charges with phase estimation \cite{Jordan2014}.

The first two steps together can be thought as the initialization phase of algorithm. In the previous results, initialization has been the bottleneck of the QFT simulation algorithms. For this reason, here we only focus on improving this part of the algorithm, specifically preparation of the vacuum of the interacting theory. The performance of Refs.\ \cite{Jordan2012,Jordan2014} is limited by slow adiabatic transitions in order to avoid exciting extra particles. Reference \cite{HamedMoosavian2018} improves upon the fermionic result by using Matrix Product State properties of one dimensional systems and a classical heuristic algorithm known as the density-matrix renormalization group (DMRG); however, its applications are mostly limited to one dimensional systems and the performance is limited by the classical part of the algorithm \footnote{It is important to note that results like \cite{Schuch2011} that imply gapped one dimensional systems are in the same phase and therefore can be prepared in constant time, do not apply here. In particular they assume that the ratio between correlation length of the system and lattice spacing does not increase with adding more sites to the system. However, in these simulation algorithms one typically assumes system size to be fixed and lattice step decreasing with more sites.}.
In principle, quantum computational power could be used to circumvent this classical bottleneck.

In this paper, we present an efficient method for initial state preparation that is inherently quantum and generalizes to fermionic QFTs in any number of dimensions.
We numerically demonstrate its performance in a 1+1 dimensional fermionic QFT, namely the Gross-Neveu model \cite{Gross1974a}.
In this case in the asymptotic limit of infinite precision for constant system size, the expected performance of this state preparation method scales at least as well with the precision goal $\epsilon$ as the remaining steps in the simulation algorithm. We expect similar performance gains would hold more generally.

In spirit, our algorithm is related to Schwartz, Temme and Verstraete's algorithm for preparing injective \emph{Projected Entangled Pair States} (PEPS) \cite{Schwarz2011}. Their algorithm and ours both grow the system size by adding one site at a time. The main difference is that in their case one needs to know an injective PEPS representation for the state they are preparing, while our algorithm does not. Also, our algorithm performs better with regards to the precision goal $\epsilon$.

The structure of this manuscript is as follows: In \cref{sec: Prelim} we lay out two lemmas and a theorem which are the theoretical foundations of the paper. These are then utilized in \cref{sec: Overview}, which is concerned with explaining our algorithm. \Cref{sec: GrossNeveu} introduces the fermionic Gross-Neveu model as a testbed for our algorithm. Specifically, in \cref{subsec: overview of GrossNeveu} we review the model, and in \cref{subsec: NumericalAnalysis} we provide numerical evidence that our algorithm applies to it. Finally, we conclude in \cref{sec: Conclusion}.

\section{Preliminaries} \label{sec: Prelim}

Aharonov and Ta-Shma in a seminal paper showed that if two states have nonnegligible overlap, with some physically motivated assumptions, one can transition between them in time polynomial in system size \cite{Aharonov2007}. The current paper was in part inspired by their Jagged Adiabatic Path Lemma; however, in the current paper we are using another approach that relies on more modern techniques.

\begin{lem}[Phase estimation with $O((n/m)\log({1/\epsilon}))$ gates]
\label{Lem: Phase estimation}
Given a simulatable Hamiltonian, $H$, that acts on $n$ qubits, and a state, $\ket{\psi}$, and a promised lower bound on the spectral gap, $\Delta(H) > m$, and an estimate for the ground energy, $\tilde{E}$, with a promise that $\left|\tilde{E}-E\right|<\frac m 2$, where $E$ is the actual ground energy; we can check whether $\ket{\psi}$ is the ground state of $H$ or not in runtime proportional to $O((n/m)\log(1/\epsilon))$, where $\epsilon$ is the probability of making a faulty decision.
\end{lem}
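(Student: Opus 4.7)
The plan is to decide the ground-state question by quantum phase estimation (QPE) of $H$ on $\ket{\psi}$. The two promises combine to separate the ground energy from every excited energy: each excited eigenvalue $E_k$ satisfies $E_k - \tilde{E} > (E + m) - \tilde{E} > m/2$, while $|E - \tilde{E}| < m/2$, so a window of radius $m/2$ around $\tilde{E}$ contains the ground energy and no other eigenvalue of $H$. QPE at resolution on the order of $m$ therefore suffices to distinguish the two cases, and the gate budget will be dominated by the cost of running that QPE on a simulatable $H$.

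Concretely, I would first run standard QPE with $t = O(\log(\|H\|/m))$ ancilla qubits, invoking the simulatable-Hamiltonian subroutine to realize the controlled evolutions, so as to obtain an estimate $E_{\mathrm{est}}$ that lies within $\delta = \Theta(m)$ of the eigenvalue onto which $\ket{\psi}$ is projected, with constant success probability. The total evolution time across one QPE call is $O(1/\delta) = O(1/m)$; under the standing assumption that $H$ can be evolved for unit time with $O(n)$ gates (which is what makes lattice local Hamiltonians ``simulatable'' in the relevant sense, via qubitization or a truncated-Dyson simulator), the gate cost of a single QPE call is $O(n/m)$.

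Next, I would reduce the failure probability to $\epsilon$ via the standard median trick: perform $O(\log(1/\epsilon))$ independent QPE calls on fresh copies of $\ket{\psi}$ and take the median $E_{\mathrm{med}}$ of the outcomes. A Chernoff argument then gives $|E_{\mathrm{med}} - E_k| < \delta$ with probability at least $1-\epsilon$, at a total gate cost of $O((n/m)\log(1/\epsilon))$, matching the claimed bound. The decision rule is to output ``ground state'' iff $|E_{\mathrm{med}} - \tilde{E}|$ lies below a threshold strictly between $|E - \tilde{E}| + \delta$ and $m/2 - \delta$: in the ground case the measured quantity is at most $|E - \tilde{E}| + \delta$ and hence below the threshold, while in the orthogonal case it exceeds $m/2 - \delta$ and hence sits above.

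The main obstacle I anticipate is a bookkeeping one. The promise $|\tilde{E} - E| < m/2$ is literally at the edge of distinguishability, so the analysis must really interpret it with a constant fraction of slack---equivalently, take $\delta$ strictly inside $m/2 - |\tilde{E} - E|$---so that the accept and reject bands above are disjoint; this is the only place where a naive reading of the lemma gets stuck. A secondary subtlety is that the advertised linear-in-$n$ cost relies on the Hamiltonian-simulation primitive scaling as $O(n)$ per unit evolution time, which is exactly what the word ``simulatable'' is encoding as a hypothesis on $H$; a weaker primitive would degrade the $n$-dependence but leave the $1/m$ and $\log(1/\epsilon)$ factors intact.
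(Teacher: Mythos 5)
Your proposal is correct and takes essentially the same route as the paper: standard quantum phase estimation at resolution $\Theta(m)$, implemented with an efficient local-Hamiltonian simulator costing $O(n)$ per unit evolution time (hence $O(n/m)$ per call), followed by Chernoff-type majority boosting to obtain the $\log(1/\epsilon)$ factor, with the decision made by comparing the estimated energy to $\tilde{E}$. The only notable difference is that the paper performs the repetition coherently---writing the phase-estimation outcomes to several registers and taking a majority vote on the single given state $\ket{\psi}$---rather than measuring fresh copies, which both avoids assuming extra copies and keeps the routine unitary so it can serve as the oracle inside the fixed-point search of the next lemma.
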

\begin{proof}
An $O(\frac{n}{m\epsilon})$ performance can be achieved using the standard phase estimation algorithm \cite{Nielsen2000}. To achieve the $O(\frac{n}{m}\log(1/\epsilon))$ scaling, one can coherently write the output of phase estimation to a number of qubits and then use majority vote and Chernoff bounds to boost the precision \cite{Chernoff1952,Chernoff1981,Nielsen2011}. Specifically, the phase estimation algorithm yields the energy of the state, which we can compare to the given estimate to decide whether it is the ground state or not. In order to achieve linear scaling with $n$, we need to implement a modern and efficient simulation algorithm for local Hamiltonians such as the ones in Ref.~\cite{Childs2019} or Ref.~\cite{Haah2018}.
\end{proof}

\begin{lem}[Mapping overlapping ground states \cite{Yoder2014}]
\label{Lem: Mapping overlapping states}
Given two simulatable Hamiltonians, $H_1$ and $H_2$, with known ground energies, $E_1$ and $E_2$, and a minimum overlap between their ground states, $\left|\braket{g_1|g_2}\right| \ge \eta$, one can get from one ground state to the other with $O\left(\frac{\log(2/\epsilon)}{\eta} \right)$ oracle calls to the phase estimation algorithm on these Hamiltonians, where $\epsilon$ is the precision goal of the algorithm.
\end{lem}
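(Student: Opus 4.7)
\medskip
\noindent\textbf{Proof proposal.}
The plan is to recast the task as fixed-point amplitude amplification and invoke the Yoder--Low--Chuang construction of Ref.~\cite{Yoder2014}. First I would convert phase estimation into approximate selective phase operators on the ground eigenspaces, and then feed those reflections into the fixed-point amplification scheme, using $\ket{g_1}$ as the starting vector and $\ket{g_2}$ as the marked vector.

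Step one is to build the reflections. Applying \cref{Lem: Phase estimation} on $H_i$ with the known ground energy $E_i$, I can coherently flag whether the input eigenvalue matches $E_i$, apply a controlled $Z$ on that flag, and uncompute. This realizes an approximate reflection $R_i \approx I - 2\ket{g_i}\bra{g_i}$ using $O(1)$ oracle calls to phase estimation. Symmetrically doing this for $i=1$ and $i=2$ gives me the two reflections required by the amplification routine.

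Step two is to run fixed-point amplitude amplification. Starting from $\ket{g_1}$, whose overlap with the target $\ket{g_2}$ is at least $\eta$, the Yoder--Low--Chuang scheme interleaves $R_1$ and $R_2$ with carefully chosen Chebyshev-inspired phases, and deterministically drives the overlap with $\ket{g_2}$ above $1-\epsilon$ in $O(\log(2/\epsilon)/\eta)$ applications of the reflections. Since each iteration uses $O(1)$ calls to $R_1$ and $R_2$, and hence $O(1)$ phase estimation calls, this matches the oracle complexity in the statement.

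The main technical hurdle is that $R_1$ and $R_2$ are only approximate: phase estimation returns an estimate of the eigenvalue, so the reflection can leak amplitude into excited eigenstates whose energies happen to lie inside the phase estimation window around $E_i$. To keep the accumulated error below $\epsilon$ across $O(\log(2/\epsilon)/\eta)$ reflection calls, each phase estimation call must be run at an internal precision that is polynomial in $\epsilon\eta$. By \cref{Lem: Phase estimation} this inflates the cost of a single oracle call by only a logarithmic factor in $1/\epsilon$, which is already absorbed into the $\log(2/\epsilon)$ factor of the quoted oracle count.
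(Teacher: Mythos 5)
Your proposal is correct and follows essentially the same route as the paper, which simply invokes the Yoder--Low--Chuang fixed-point search with the oracle replaced by the phase estimation of \cref{Lem: Phase estimation}. Your additional detail on constructing the approximate reflections and controlling the leakage error is a sensible fleshing-out of the same argument rather than a different approach.
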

\begin{proof}
This lemma is a direct result of Yoder \emph{et.~al.}'s Grover-esque fixed point quantum search \cite{Yoder2014}, when one replaces the oracle in their paper with our phase estimation from \cref{Lem: Phase estimation}.
\end{proof}

\begin{thm}[Modified Jagged Path Lemma]
\label{Thm: JAPL}
Let us assume $\left\{H_j\right\}_{j=1}^N$ is a sequence of explicit bounded-norm and geometrically local Hamiltonians in a fixed number of dimensions that act on at most $n$ qubits with nonvanishing spectral gaps, $\Delta\left(H_j\right)\ge m_j \ge 0$, where $m_j$ are real positive constants. This means that each Hamiltonian $H_j$ in the sequence is simulatable. Let us also assume we have a priori estimates of the ground energies of these Hamiltonians within accuracy better than half their spectral gap. Then, if the overlap between consecutive ground states, $\ket{g_j}$ and $\ket{g_{j+1}}$, is nonvanishing, $\left| \braket{ g_j|g_{j+1} } \right| \ge \eta > 0 $, there exists an efficient quantum algorithm that can start from $\ket{g_1}$ and output $\ket{ g_N}$ with asymptotic runtime of $O\left(\frac{Nn}{m\eta}\polylog(1/\epsilon)\right)$, where $\epsilon$ is the maximum trace distance of the final state compared to the desired eigenstate and $m = \min\left\{m_j\right\}_{j=1}^N$.
\end{thm}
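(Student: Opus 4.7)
The plan is to iterate Lemma~2 along the sequence $|g_1\rangle \to |g_2\rangle \to \cdots \to |g_N\rangle$, applying the fixed-point routine once per consecutive pair and letting Lemma~1 implement the phase-estimation oracle at each stage. The hypotheses have been tailored exactly for this chaining: the per-step spectral gap bound $m_j \ge m$ and the ground-energy estimates within $m_j/2$ are precisely what Lemma~1 requires, while the overlap bound $|\langle g_j | g_{j+1}\rangle| \ge \eta$ is precisely what Lemma~2 requires. So structurally the argument is just a composition; the real content is in carefully distributing the error budget.

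First I would fix a per-step precision $\epsilon' = \epsilon/N$, motivated by the following error-accumulation claim, which I would prove by induction on $j$: if $|\tilde g_j\rangle$ is the state produced by the algorithm after $j$ steps, then its trace distance from $|g_j\rangle$ is at most $j\epsilon'$. The inductive step follows from the triangle inequality, plus the fact that running Lemma~2 starting from a state that is $\epsilon'$-close to the ideal input $|g_j\rangle$ still produces an output that is $\epsilon'$-close to $|g_{j+1}\rangle$ (the effective overlap is at worst $\eta - \epsilon'$, which we can absorb into $\eta$ with a constant factor by, e.g., demanding $\epsilon \le \eta/2$ from the start, or by redefining $\eta$).

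Now I would tally the cost. Each invocation of Lemma~2 at precision $\epsilon'$ makes $O(\eta^{-1}\log(1/\epsilon')) = O(\eta^{-1}\log(N/\epsilon))$ calls to the phase-estimation oracle. To ensure the fixed-point routine as a whole achieves error $\epsilon'$, each oracle call must itself be implemented with failure probability $O(\epsilon'/\log(1/\epsilon'))$, which Lemma~1 realizes in time $O((n/m)\log(N/\epsilon))$, using a modern linear-in-$n$ simulation subroutine such as those of \cite{Childs2019,Haah2018} to handle the geometrically local $H_j$. Multiplying, the cost of one step is $O\bigl((n/(m\eta))\,\log^{2}(N/\epsilon)\bigr)$, and summing over the $N$ steps gives a total runtime of $O\bigl((Nn/(m\eta))\,\log^{2}(N/\epsilon)\bigr)$, which under the standard convention that $N$ is at most polynomial in the other parameters collapses to the claimed $O\bigl((Nn/(m\eta))\,\polylog(1/\epsilon)\bigr)$.

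The main obstacle, and the only step that needs genuine care rather than bookkeeping, is verifying that the approximate input to each fixed-point search does not degrade the overlap promise enough to spoil the recursion. Once one checks that a trace-distance perturbation of size $\epsilon'$ shifts inner products by $O(\epsilon')$, so that the effective overlap stays bounded below by, say, $\eta/2$ for all $N$ steps, the rest is a mechanical composition of Lemmas~1 and~2 with a $\log N$ inflation of the per-step precision. The simulatability, bounded norm, and geometric locality assumptions are used exclusively to invoke Lemma~1 with linear-in-$n$ overhead; if any of these were dropped, the factor of $n$ in the runtime would become worse but the overall structure of the proof would be unchanged.
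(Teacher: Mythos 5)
Your proposal takes essentially the same route as the paper's proof: chain Lemma~2 over the consecutive pairs $\ket{g_j}\to\ket{g_{j+1}}$, using Lemma~1 as the phase-estimation oracle, so each step costs $O\left(\frac{n}{m\eta}\polylog(1/\epsilon)\right)$ and the $N$ steps give the claimed total. Your explicit error budgeting (the $\epsilon/N$ split, the induction on accumulated trace distance, and the check that an $\epsilon'$-perturbed input does not spoil the overlap promise) is in fact more careful than the paper's two-line constructive argument, which absorbs these bookkeeping issues into the $\polylog$ factor without comment.
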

\begin{proof}
We prove this theorem constructively.
\begin{itemize}
\item We use \cref{Lem: Mapping overlapping states} to transform each ground state in the sequence, $\ket{g_j}$ to $\ket{g_{j+1}}$. Each oracle call is given by the phase estimation from \cref{Lem: Phase estimation}.  Thus, each transformation takes at most $O\left(\frac{n}{m\eta}\polylog(1/\epsilon)\right)$ gates.
\item Because there are $N$ states in the sequence, the overall runtime will be of order $O\left(\frac{Nn}{m\eta}\polylog(1/\epsilon)\right)$.
\end{itemize}
\end{proof}

\section{Overview of the Algorithm} \label{sec: Overview}
\subsection{1+1 dimensions} \label{subsec: 1+1}
There already exist efficient quantum algorithms for preparing the vacuum state of a 1+1 dimensional quantum field theory \cite{HamedMoosavian2018}. Nevertheless, we here use a 1+1 dimensional quantum field theory as a test case for our algorithm, in order to numerically investigate the key unknown quantities determining its performance, namely the state overlaps $\eta$. For now we assume Dirichlet boundary conditions \cite{Costa2019}, although it is easy to generalize the algorithm. The input of the algorithm is a continuous gapped one-dimensional Hamiltonian $H$, system length $L$, maximum error $\epsilon$ and an array of quantum registers initialized in the standard basis, $Q$. Each register consists of qubits which together represent the state of a single site.  The output should be an approximation to the vacuum of $H$ on the array $Q$ that can yield cross sections with precision better than $\epsilon$.

If the system size is much larger than the correlation length of the system, $\chi$, then the inner products should reach a stable value, as the systems are basically unaffected by adding an extra site (\cref{Fig: 1D-lattice}). If the asymptotic value of this inner product is nonzero, it means that if the initial system is larger than a certain size, we can build the vacuum by inductively adding more sites.
\begin{figure}
\includegraphics[scale=0.80]{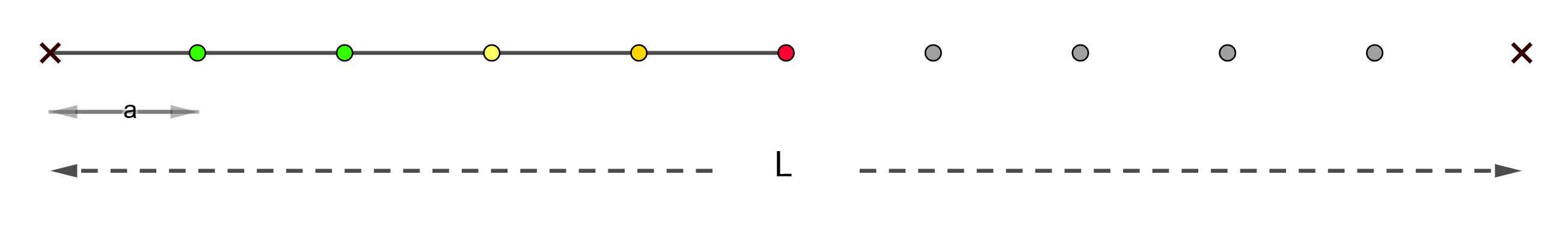}
\caption{(color online) Adding an extra site to a 1D lattice. The crosses represent the boundary (Dirichlet). With a limited correlation length we expect only sites within that distance to be affected by adding an extra site to the system.\label{Fig: 1D-lattice}}
\end{figure}
\subsection{Higher dimensions} \label{subsec: Higher dims}

In higher spatial dimensions, the algorithm is similar to 1+1 dimensions, with the difference that the discretization of the Hamiltonian is more involved and the order of adding extra sites is not uniquely defined.

Suppose the system volume is $V=L_1\times L_2 \times \dots \times L_D$. Then, in the high precision asymptotic limit, the lattice spacing should be $a \sim \epsilon$, where $\epsilon$ represents the maximum relative error of the scattering amplitudes. However, in the high energy limit, the wavelength of the particles would be the deciding factor and $a\sim p^{-1}$, where $p$ represents the momentum of the incoming particles. Similar to the 1+1 dimensional case, one can take $a \sim \epsilon/p$ to respect both asymptotic limits simultaneously \cite{Jordan2014}.

As for the order of adding new sites, one reasonable method is to try to keep it as close to a $D$-dimensional hypercube as possible. \Cref{Fig: 2D-lattice} can be seen as an example of how one can do this in two spatial dimensions, or the side of a three dimensional cube.
\subsection{The algorithm} \label{subsec: algorithm}

In general, our proposal for this state preparation algorithm is as follows. Let us assume our Hamiltonian lies in $D$ spatial dimensions, and its volume is $V=L_1\times L_2 \times \dots \times L_D$. Also, let $\epsilon$ be the precision goal of the entire scattering simulation. Then do the following:
\begin{itemize}
\item Set the lattice spacing, $a$, as $a\propto \epsilon/p$.
\item Properly discretize the Hamiltonian. This means replacing derivatives with finite differences and dealing with discretizing issues such as fermion doubling \cite{Wilson1974,Jordan2014}.
\item Given a boundary condition (e.g.\ Dirichlet), prepare the ground state, $|g_{N_0}\rangle$, of the discretized Hamiltonian with $N_0 \ll N=\frac{V}{a^D}$, i.e.\ a small constant number of sites.
\item Apply a unitary gate (e.g.\ Hadamard) on the rest of the qubits, $|Q_j\rangle \forall j \in \left\{ N_0 +1,\dots,N \right\}$, which is hoped to provide a reasonable overlap between states in the next phase of the algorithm (see \cref{sec:Inner products}).
\end{itemize}
As before, there is only one step in the iterative phase of the algorithm:
\begin{itemize}
\item For every $j\in \left\{ N_0 ,\dots,N-1\right\} $, transform $|g_j\rangle \otimes | Q_{j+1}\rangle$ to $|g_{j+1}\rangle$ by applying \cref{Thm: JAPL}.
\end{itemize}
This yields a runtime of $O\left(\frac{V^2}{a^{2D}\eta}\polylog\left(1/\epsilon\right)\right)$. For the sake of clarity, we will include a conjecture that captures the unproven physical intuition that goes into this algorithm.

\begin{conj}[Overlap of ground states]
Assume a properly discretized massive fermionic QFT that obeys the Wightman axioms, in particular, the energy-momentum spectral condition \cite{Strocchi2004}. Let $\ket{g_j}$ be the ground state of the system with $j$ sites and $\eta$ be defined as $\lim_{j\to \infty}\left|\left(\bra{g_j}\otimes \bra{Q}\right)\ket{g_{j+1}}\right|$, where $\ket{Q}$ is an unentangled state that is present to make the Hilbert spaces compatible. Then there exists $\ket{Q}$ for which $\eta > 0$.
\end{conj}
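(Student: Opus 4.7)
The approach is to recast $\eta$ as a quantity depending on a single reduced density matrix and then exploit exponential clustering of correlations. Maximizing over normalized $\ket{Q}$ on the appended site and using Cauchy--Schwarz gives
\begin{equation*}
\max_{\ket{Q}}\bigl|(\bra{g_j}\otimes\bra{Q})\ket{g_{j+1}}\bigr|^{2}=\bra{g_j}\rho_{j}\ket{g_j},
\end{equation*}
where $\rho_{j}=\mathrm{Tr}_{j+1}\ket{g_{j+1}}\!\bra{g_{j+1}}$ and the maximum is attained by the Uhlmann-type choice $\ket{Q}\propto(\bra{g_j}\otimes I)\ket{g_{j+1}}$. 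It therefore suffices to bound $\bra{g_j}\rho_{j}\ket{g_j}$ below by a positive constant uniformly in $j$.

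First I would invoke the energy-momentum spectral condition together with the mass gap to import an exponential clustering theorem of Ruelle/Hastings type, adapted to the lattice discretization, yielding a correlation length $\chi=O(1/m)$ and bounds of the form $|\langle AB\rangle_{c}|\lesssim \|A\|\,\|B\|\,e^{-\mathrm{dist}(A,B)/\chi}$ for local observables in the interior. Next, combining clustering with the variational characterization of gapped ground states, I would argue that $\rho_{j}$ agrees with $\ket{g_j}\!\bra{g_j}$ up to a perturbation localized within $O(\chi)$ of the appended site $j+1$: concretely, the marginals of $\rho_{j}$ and of $\ket{g_j}\!\bra{g_j}$ on sites at distance $R\gg\chi$ from the right boundary should differ in trace norm by $e^{-\Omega(R/\chi)}$, since in both cases the bulk is simply the thermodynamic vacuum screened from a Dirichlet edge.

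In $1{+}1$ dimensions I would close the argument using the Hastings area law: $\ket{g_j}$ admits an MPS description with uniform bulk tensors and with boundary vectors that converge exponentially to their thermodynamic-limit values. The scalar $\bra{g_j}\rho_{j}\ket{g_j}$ then becomes a transfer-matrix contraction whose leading eigenvalue is isolated from the rest of the spectrum by a gap inherited from the Hamiltonian gap, and standard Perron--Frobenius arguments produce a strictly positive limit as $j\to\infty$. In $D\ge 2$ I would attempt the same strategy using a PEPS representation, or, in the strongly massive regime, a convergent cluster expansion of $\ket{g_{j+1}}$ around a product reference.

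The hardest step is the higher-dimensional closing argument. The $1{+}1$D proof rests on Hastings' area law and the gapped transfer matrix, both of which are unavailable for generic gapped fermionic QFTs in $D\ge 2$; there one would need either a new structural theorem for the ground state or a direct convergent cluster expansion, both of which are open problems. This obstruction is precisely why the statement is cast as a conjecture rather than a theorem, with numerical support provided in the remainder of the paper.
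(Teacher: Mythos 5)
The statement you were asked to prove is presented in the paper as a \emph{conjecture}: the authors give no proof, only (i) citations establishing $\eta>0$ in special cases where the ground state is known to be an injective or topological PEPS and (ii) numerical evidence for the Gross--Neveu model, plus the cautionary AKLT example. So there is no paper proof to compare against, and any complete proof would exceed what the authors claim. Your opening reduction is correct and worth keeping: maximizing over the appended single-site state via Cauchy--Schwarz gives $\max_{\ket{Q}}\left|\left(\bra{g_j}\otimes\bra{Q}\right)\ket{g_{j+1}}\right|^{2}=\bra{g_j}\rho_{j}\ket{g_j}$ with $\rho_{j}=\mathrm{Tr}_{j+1}\ket{g_{j+1}}\bra{g_{j+1}}$, so the conjecture is equivalent to a uniform lower bound on this fidelity-type quantity.

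The middle of your argument, however, has a genuine gap. Exponential clustering (Ruelle/Hastings) only yields that the \emph{local marginals} of $\rho_{j}$ and $\ket{g_j}\bra{g_j}$ on regions far from the appended site are close in trace norm; agreement of all local marginals does not imply that the \emph{global} overlap $\bra{g_j}\rho_{j}\ket{g_j}$ is bounded away from zero --- two gapped ground states can be locally indistinguishable in the bulk yet globally orthogonal. The paper's own AKLT discussion is precisely a counterexample to this inference: the bulk physics is unchanged by adding one site, yet the overlap of consecutive ground states is exactly zero (one must add two sites at a time), so any valid argument must use more than gap plus clustering. Your $1{+}1$D closing step via a gapped transfer matrix and Perron--Frobenius implicitly assumes an injective MPS representation with a unique fixed point; but injectivity is exactly the already-known sufficient condition (the Schwarz--Temme--Verstraete setting the authors cite) that the conjecture aims to avoid assuming, and it does not follow from the spectral gap or the area law alone. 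Fermionic parity superselection raises the same issue concretely: the Uhlmann-optimal single-site vector $(\bra{g_j}\otimes I)\ket{g_{j+1}}$ can vanish identically when consecutive ground states carry incompatible symmetry or parity structure. Together with your acknowledged openness of the $D\ge 2$ case, this means the proposal is a reasonable research program but not a proof, which is consistent with the authors' decision to state the claim only as a conjecture supported numerically.
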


The value $\eta$ is provably nonzero in many cases, for example if the ground states are described as injective PEPS \cite{Schwarz2011} or if they are topological PEPS \cite{Schwarz2012}.

Some quantum systems will admittedly have ground states that seem to counter the conjecture above. For example, in the AKLT model \cite{Affleck1987a}, the overlap between consecutive ground states is provably zero. However, upon further investigation, one realizes that the AKLT model does not have a single site coarse continuum limit and you need to keep two sites at a time \cite{DelasCuevas2018}. By adding two sites at a time, one can in fact get nonzero and constant overlap between the ground states.

\begin{figure}
\includegraphics[scale=0.22]{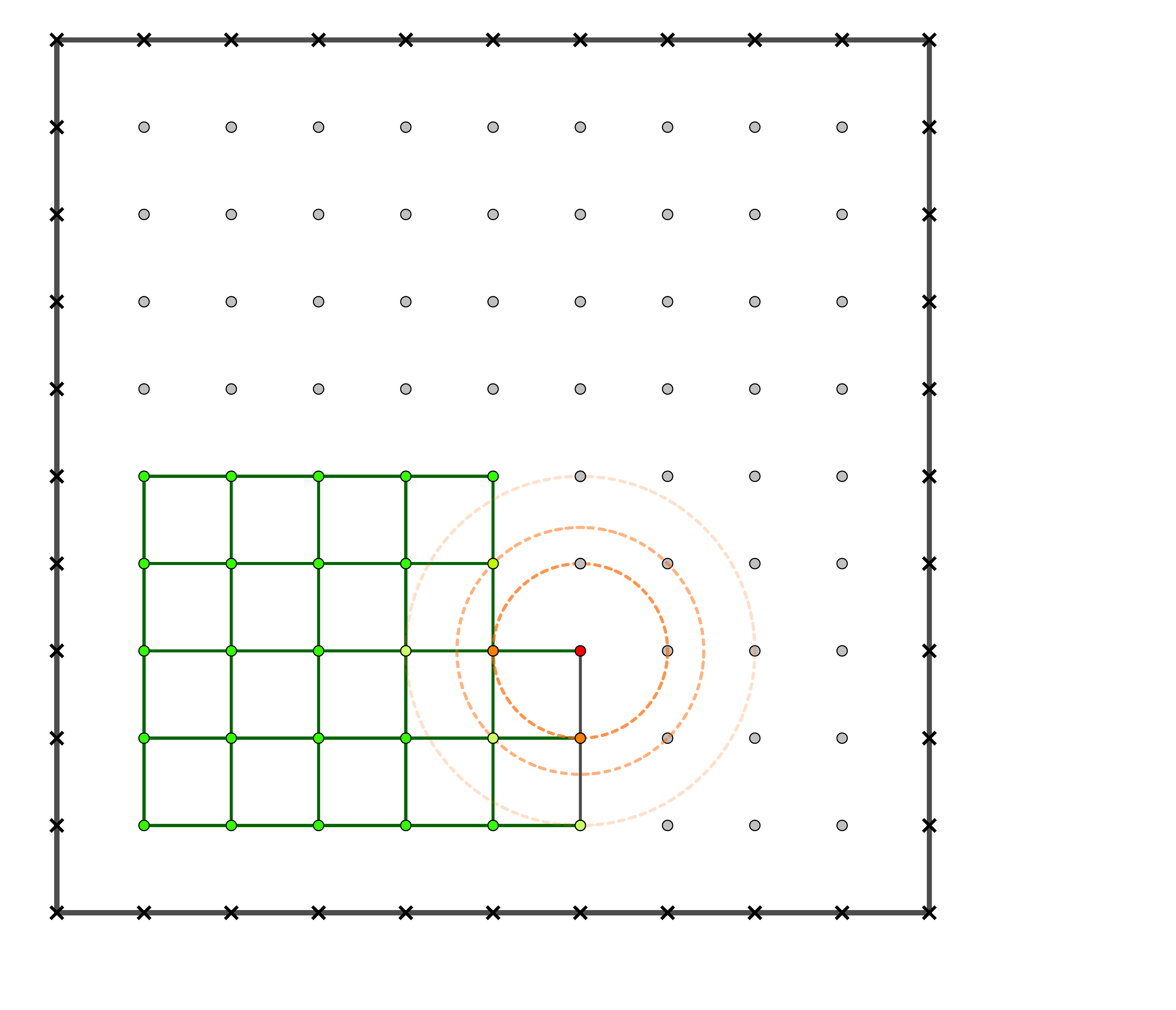}
\caption{(color online) Adding an extra site to a 2D lattice. Because of limited correlation length, only few sites are expected to be affected by the introduction of a new site to the system.\label{Fig: 2D-lattice}}
\end{figure}
\section{Gross-Neveu model} \label{sec: GrossNeveu}
\subsection{Overview of the model} \label{subsec: overview of GrossNeveu}
In this section we will introduce the Gross-Neveu model and use it as a test case for our proposal. The model was introduced in 1974 as a toy model for quantum chromodynamics \cite{Gross1974a}. It is a fermionic QFT that lives in 1+1 spacetime dimensions and exhibits different particle flavors as well as asymptotic freedom. It was originally defined as a massless theory, which has chiral symmetry.  We explicitly break this symmetry by introducing a mass term in the Hamiltonian. The Lagrangian for the massive theory with $\mathcal{N}$ species is given by:
\begin{equation}
 \mathcal{L}=\sum_{j=1}^{\mathcal{N}}\bar{\psi}_j\left(i\gamma^{\mu}\partial_{\mu}-m\right)\psi_j+\frac{g^2}{2}\left(\sum_{j=1}^{\mathcal{N}}\bar{\psi}_j\psi_j\right)^2\, ,
\end{equation}
where $g$ represents the interaction strength, $\gamma^{\mu}$ are two-dimensional representations of the Dirac field, $\bar{\psi}=\psi^{\dagger}\gamma^0$, and each field $\psi_j$ has two components \cite{Jordan2014}. We use the Majorana representation for the $\gamma$ matrices, where they are explicitly written as:
\begin{equation}
 \gamma^0 = i\left( \begin{array}{cc}
             0 & -1 \\
             1 & 0
            \end{array}\right)\, ,
\end{equation}
\begin{equation}
\gamma^1 = -i\left(\begin{array}{cc}
                   0 & 1 \\
                   1 & 0
                   \end{array}
\right) \, .
\end{equation}

For simulation purposes it is more convenient to work with the equivalent Hamiltonian formalism. Additionally, to simulate the scattering process on a digital quantum computer we need to discretize the model and put it on a lattice. Discretizing the model and putting it on a lattice introduces extra fermions; this is known as the fermion doubling problem \cite{Nielsen1981,Nielsen1981a}.  These extra fermions can be handled via different methods such as Wilson fermions \cite{Wilson1974}, Kogut-Susskind staggered fermions \cite{Kogut1975, Banks1976, Susskind1977} or domain wall fermions \cite{Kaplan1992}. For instance, if we had periodic boundary conditions and we had wanted to utilize Wilson fermions, we would have had to add an extra term to the Hamiltonian that decouples the extra fermions from the ground state (\cref{Fig: Wilson}). The full Hamiltonian of the system after discretizing would then be \cite{HamedMoosavian2018}:
\begin{equation}
H = H_0 + H_g + H_W \,,
\end{equation}
where
\begin{widetext}
\begin{align}
H_0 & = \sum_{x \in \Omega} a \sum_{j=1}^{\mathcal{N}} \sum_{\alpha,\beta\in \{0,1\}}\bar{\psi}_{j,\alpha}(x)
\left[ -i \gamma_{\alpha\beta}^1 \frac{\psi_{j,\beta}(x + a)
- \psi_{j,\beta}(x-a)}{2a} + m_0 \delta_{\alpha,\beta}\psi_{j,\beta}(x) \right] \label{h0} \,, \\
H_g & = -\frac{g_0^2}{2} \sum_{x \in \Omega} a \bigg( \sum_{j=1}^{\mathcal{N}}\sum_{\alpha\in \{0,1\}}
\bar{\psi}_{j,\alpha} (x) \psi_{j,\alpha}(x) \bigg)^2 \label{hg} \,, \\
H_W & = \sum_{x \in \Omega} a \sum_{j=1}^{\mathcal{N}} \sum_{\alpha\in \{0,1\}} \left[ - \frac{r}{2a}
\bar{\psi}_{j,\alpha}(x) \left( \psi_{j,\alpha}(x+a) - 2 \psi_{j,\alpha}(x) + \psi_{j,\alpha}(x-a) \right)
\right] \,.
\label{hw}
\end{align}
\end{widetext}
Here, $H_0$ represents the noninteracting term of the Hamiltonian, $H_g$ represents the interaction term, and $H_W$ is the Wilson term. The summation variable $j \in \{1,2,\dots,\mathcal{N}\}$ indicates the fermion species, and $0 < r
\leq 1$ is called the Wilson parameter. $H$ is spatially local in the sense
that it consists only of single-site and nearest-neighbor terms on the
lattice.

If one wants to simulate the continuum limit of the Gross-Neveu model, they should eliminate the doubled fermions through some mathematical procedure. However, in our minimal approach for a numerical example, it suffices to note that the extra particles are not necessarily a problem. In our test example, the doubled fermions can be thought as extra flavors of fermions.

The $\mathcal{N}=1$ case of the massive Gross-Neveu model, which we will be using to check our proposal, is equivalent to the massive Thirring model, which in turn can be solved analytically using Bethe ansatz \cite{Haldane1982,Mandelstam1975,Coleman1975,Korepin1979,Okwamoto1983,
Zamolodchikov1979}. Although Bethe ansatz is a powerful tool, it does not work for all systems, and in this specific case the solutions are rather complicated. Instead, we focus here on more general numerical approaches, which can in principle work for arbitrary $\mathcal{N}$. Specifically, we rely on a DMRG algorithm \cite{Schollwock2011a} to classically calculate the ground state as a Matrix Product State.  The DMRG code we developed is written in Julia \cite{Bezanson2017} and is available online \cite{HamedMoosavian2019}.

\begin{figure}
\includegraphics[scale=0.27]{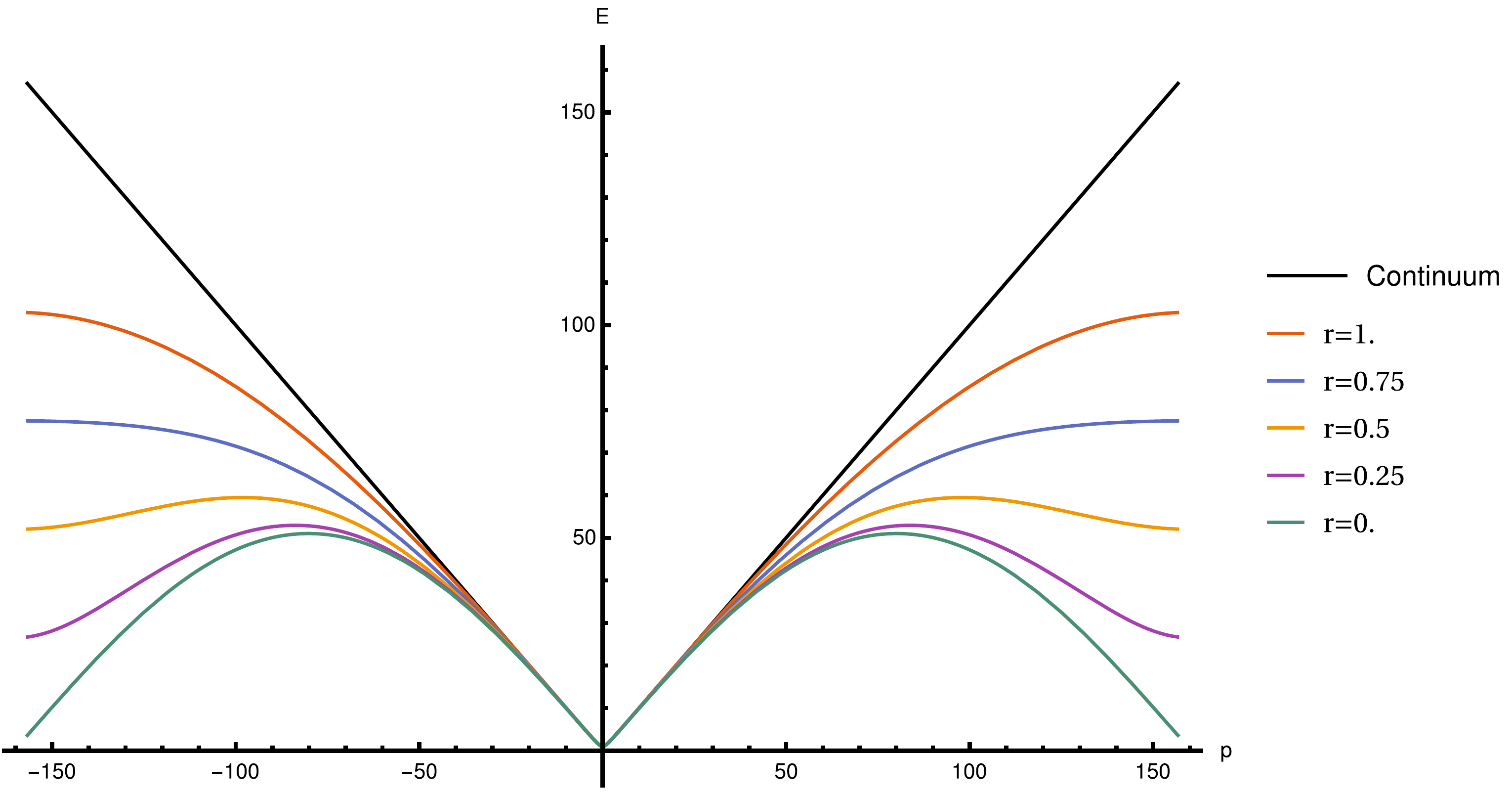}
\caption{(color online) The dispersion relation of the non-interacting theory for different values of the Wilson parameter, $r$, compared to the continuum limit. In this plot $m_{0}$ is set
to be $m_{0}=1$ and the descretized system is set to have unit length with $N=50$ sites. The dispersion relation in this case is given by \cite{Jordan2014}: $E_{m_{0}}^{(a)}(\mathbf{p})=\sqrt{\left(m_{0}+\frac{2r}{a}\sin^{2}\left(\frac{\left|\mathbf{p}\right|a}{2}\right)\right)^{2}+\frac{1}{a^{2}}\sin^{2}\left(\left|\mathbf{p}\right|a\right)}\, .$
\label{Fig: Wilson}}
\end{figure}

\subsection{Numerical analysis and diagrams} \label{subsec: NumericalAnalysis}
Ideally, if one had access to a sufficiently advanced quantum computer, one might first choose the desired simulation parameters and then use them to determine the system size that is necessary for accurate simulation at that point.
However, with limited classical computational power, we can only verify our proposal for reasonably small system sizes.
Therefore, we aim to find a range of parameters, $m_{0},g_{0}$, that can be simulated accurately on a system with $\sim 50$ sites.  Specifically, the parameter regimes we choose must yield ground states with correlation lengths that are simultaneously much smaller than our simulation size and much larger than the lattice spacing.

\subsubsection{Mass renormalization} \label{subsubsec: Mass Renormalization}
If the interaction strength is set to zero (i.e., we are working in the free theory), then it is straightforward to calculate the two-point correlation functions in the continuum limit.

\begin{equation}
\psi(x)=\int \frac{dp}{2\pi}\frac{1}{\sqrt{2E_p}}\left(a(p)u(p)e^{-ip \cdot x}+b^{\dagger}(p)v(p)e^{ip \cdot x}\right)\ ,
\end{equation}
where $E_p=\sqrt{p^2+m_0^2}$, $u(p)=\left(
    \begin{array}{c}
      \sqrt{E_p-p} \\
      i\sqrt{E_p+p}
    \end{array}
  \right)$, and $v(p)=\left(
      \begin{array}{c}
        \sqrt{E_p-p} \\
        -i\sqrt{E_p+p}
      \end{array}
\right)$. $a(p)$ and $b^{\dagger}(p)$ are creation and annihilation operators. Then the two-point correlation function can be calculated as:
\begin{align}
\langle 0 |\psi_0(x)\bar\psi_0(y)|0\rangle &= \int\frac{dp\, dq}{\left(2\pi\right)^2}\frac{1}{2\sqrt{2E_pE_q}}e^{i(qy-px)} \nonumber \\
& \qquad \times \sqrt{E_p-p}\sqrt{E_q+q}\langle 0|a(p)a^{\dagger}(q)|0\rangle \\
 &= \int_{-\infty}^{\infty}\frac{dp}{2\pi}\frac{m_0}{2\sqrt{p^2+m_0^2}}e^{ip(x-y)} \\
 &= \frac{m_0}{2\pi} K_0\left(m_0\left|x-y\right|\right) \ ,
\end{align}
where $K_0$ is the modified Bessel function of the second kind. We expect the two-point correlation functions to keep such a form even in the discretized and interacting case, i.e.,
\begin{equation}
\langle 0 |\psi_0(x)\bar\psi_0(x+\Delta x)|0\rangle \propto K_0\left( \frac{\Delta x}{\chi(m)}\right) + O(\epsilon) \label{Eq: TwoPointbehavior}\ ,
\end{equation}
where $\chi(m) \propto 1/m$ is the correlation length, which is generally inversely proportional to the renormalized mass. Asymptotically, $K_0(\zeta)$ behaves like an exponentially decaying function in the limit $\zeta\to\infty$ \cite{Hastings2005}:
\begin{equation}
K_0(\zeta) \sim \sqrt{\frac{\pi}{2\zeta}}e^{-\zeta}\left(1-\frac{1}{8\zeta}+\frac{9}{128\zeta^2}+O\left(\frac{1}{\zeta^3}\right)\right)\ .
\end{equation}
Going forward, we use an exponentially decaying form to numerically calculate the renormalized mass from the two-point correlation functions.

\begin{figure*}
\includegraphics[scale=0.55]{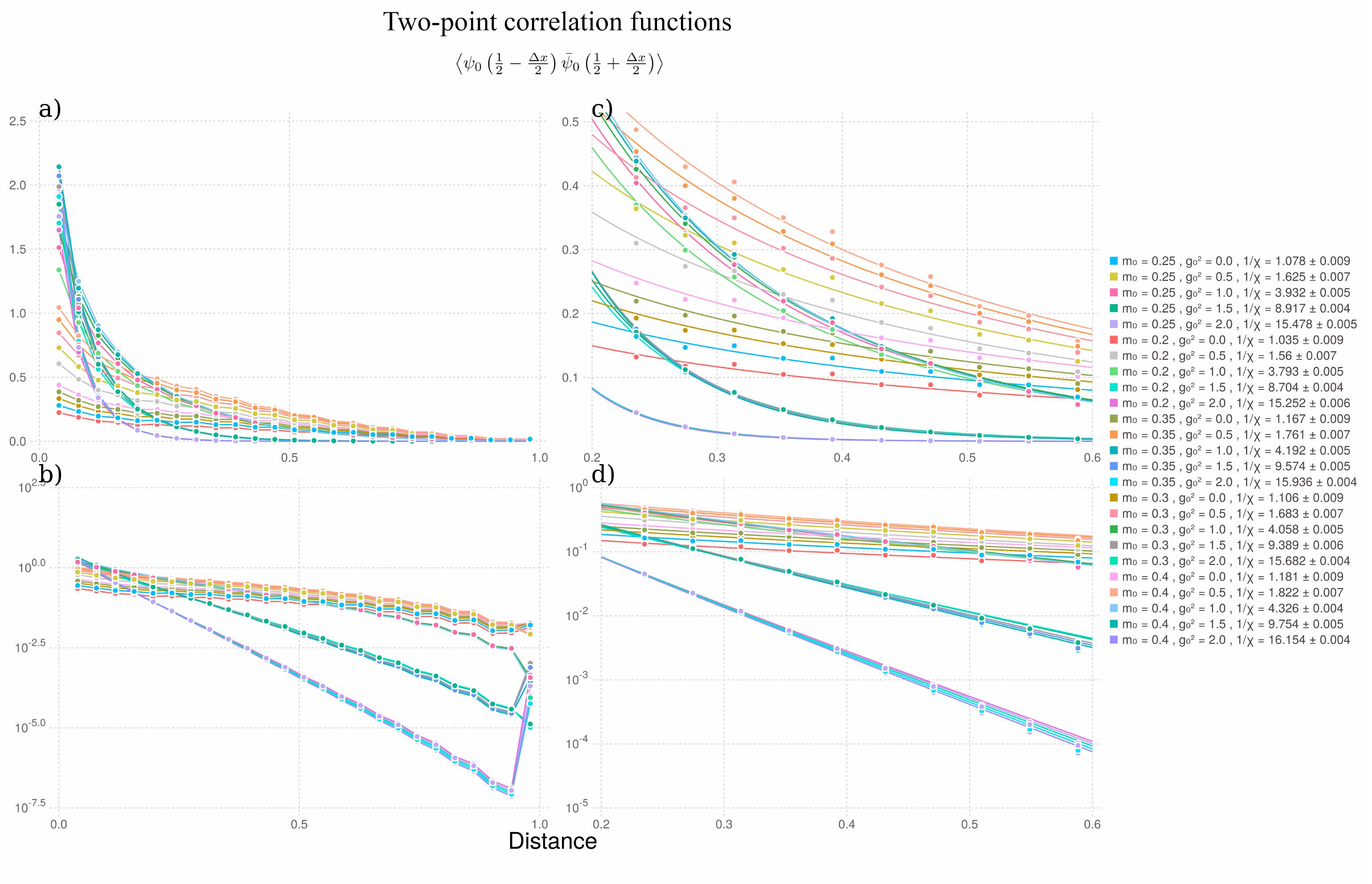}
\caption{(color online) A sample of two-point correlation functions calculated for different set of $m_0$ and $g_0^2$ parameters. In subplots a) and b) we can see the entirety of the two correlation functions calculated over the span of distances. In subplots c) and d) we have only kept a range of distances in the middle. Also, the curves in these two subplots are the result of fitting $bK_0(\frac{\Delta x}{\chi})$ to the date points. The legend on the right shows parameters $m_0$ and $g_0^2$ for each set of data points as well as the inverse of the calculated correlation length, $\frac{1}{\chi}$.  \label{Fig: TwoPoint}}
\end{figure*}

\subsubsection{Bare mass and interaction strength} \label{subsubsec: parameters}
We investigate a range of values for bare mass, $m_0$, and interaction strength, $g_0$, specifically looking for the sets of parameters that yield correlation lengths that are much longer than our lattice spacing and at the same time much smaller than the system size. Because we have set the length of the system to be $1$ (in units of inverse energy), a reasonable correlation length should be around $\sqrt{\frac 1 {51}} \approx \frac 1 {7.14}$, or about $7$ lattice spacings. The goal of the rest of this subsection is not to pinpoint the parameters that nail such a correlation length, but rather find values that yield viable correlation lengths that ensure the calculations in the following subsections are valid.

After some preliminary calculations it seems that a good range of parameters that yield reasonable correlation lengths would be $g_0^2 \in \left[ 0,2.0\right]$ and $m_0 \in \left[0.2,0.4\right]$. We calculate the two-point correlation function, $\left\langle \psi_0\left( \frac{1}{2} - \frac{\Delta x}{2} \right)\bar\psi_0\left(\frac{1}{2}+\frac{\Delta x}{2}\right)\right\rangle,\forall x\in\{a,2a,\dots,\frac N 2 a\}$, of the system for a uniform distribution of parameters in that range. (Some of these two-point correlation functions can be seen in \cref{Fig: TwoPoint}.) \Cref{Eq: TwoPointbehavior} should hold for distances much larger than the lattice spacing. Therefore, ideally we are interested in the long range behavior of these correlation functions. However, at very long distances because of boundary effects and limited machine precision, our numerics deviate from \cref{Eq: TwoPointbehavior}. In order to avoid these issues, we hand pick a range of distances that are much larger than the lattice spacing and at the same time are far enough from the edge of the system. This range of distances shows the least amount of deviations. We determine the correlation length, $\chi$, by finding the best set of values, $b$ and $\chi$, that fit the data with $bK_0\left(\frac{\Delta x}{\chi}\right)$.

Based on the results in \cref{Fig: TwoPoint}, we deem parameters $\left(m_0 = 0.2 , g_0^2 = 1.5\right)$ and $\left(m_0 = 0.4 , g_0^2 = 1.0\right)$ to have correlation lengths suitable for further numerical calculations at the desired system size of $N=50$.

\subsubsection{Inner products} \label{sec:Inner products}
Now that we have found a set of reasonable parameters, let us look at the inner product between the ground states of systems with different numbers of lattice sites. In order to make the inner product well defined, we need to add unentangled extra sites to the smaller system so the Hilbert spaces will be the same. The extra site we add in our numerical analysis is a uniform superposition over the standard basis. In the case of $\mathcal{N}=1$, we need two qubits per site after mapping the fermionic system to qubits using a Jordan-Wigner \cite{Jordan1928,Batista2000} transformation (see Ref.\ \cite{HamedMoosavian2018} for a detailed explanation of this mapping for the Gross-Neveu model). The state of the extra site would be:
\begin{equation}
 \frac{1}{2}\sum_{j=0}^3|\bar{j}\rangle\, ,
\end{equation}
where the bar in $\bar j$, means it is written in base $2$ \footnote{Note that in most cases, especially systems that have conserved quantum numbers, this would not yield the maximum inner product. For example, in our case of the massive Gross-Neveu model, by choosing the extra site to be in the $(\ket{01}+\ket{10})/ \sqrt 2$ state, one can improve the inner products by a factor of $\sqrt2$. Choosing the best state for the extra site requires some knowledge about the symmetries of the system, which in many cases can be difficult to know before doing deep analysis of the system. Therefore, to be conservative, we decided to use the more generic uniform superposition for the added site.}. Now that the inner product is well defined, we increase the number of lattice sites, one at a time, and calculate the inner product between these ground states. As can be seen in \cref{Fig: Inner}, the inner products rapidly converge to a positive constant. This shows that our conjecture works for these sets of parameters of the Gross-Neveu model. Therefore, assuming we can classically estimate ground energies, our algorithm can be used to prepare their ground states.

\begin{figure}
 \includegraphics[scale = 0.45]{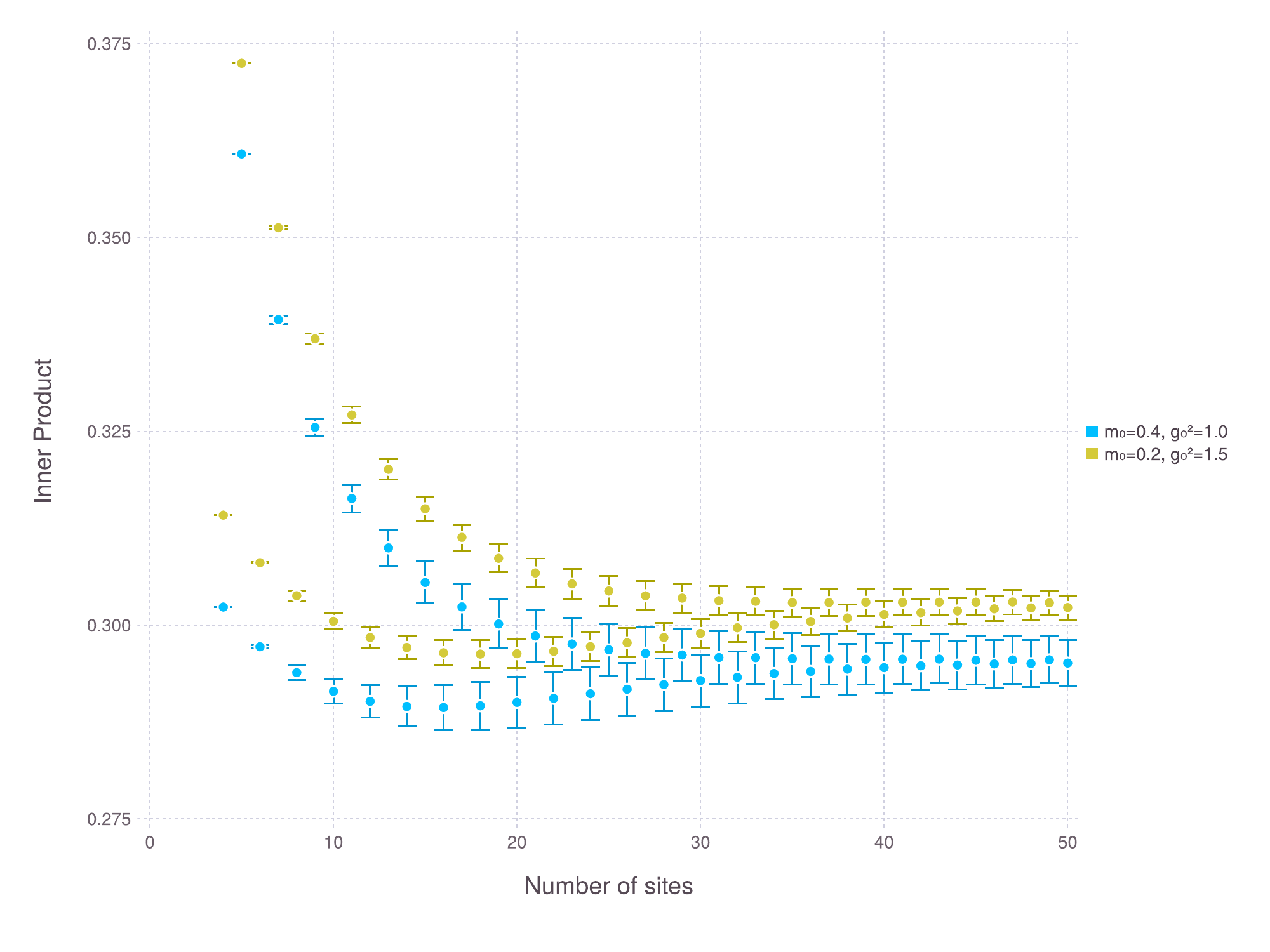}
 \centering
 \caption{(color online) Inner products between systems with different numbers of lattice sites. The inner products are calculated for $\left(m_0 = 0.2 , g_0^2 = 1.5\right)$ and $\left(m_0 = 0.4 , g_0^2 = 1.0\right)$.\label{Fig: Inner}}
\end{figure}

With a back of the envelope calculation we expect the asymptotic value of the overlap, $\eta=\lim_{j\to \infty}\left(\langle g_j|\otimes \langle Q_{j+1}|\right)|g_{j+1}\rangle$, to be $\eta \propto e^{-\nicefrac{\chi}{a}}$ in one spatial dimension. (In $D$ spatial dimensions we expect this to be $\eta \propto \exp\left(-\frac{\chi^D}{a^D}\right)$, as more sites are affected by the introduction of a new site to the system.) However, what we observe in \cref{Fig: Inner} shows inner products of surprisingly large magnitude and mild dependence on correlation length. We were surprised by this result, though it is good news for our algorithm, and we hope in the future to investigate additional lattice models to find out how generally it holds.

\subsubsection{Predicting the energy} \label{subsubsec: energy prediction}
There is still one condition from \cref{Thm: JAPL} that needs to be satisfied before we can prepare the vacuum of the Gross-Neveu model using that theorem: We should be able to predict the ground state energy with accuracy better than half of the gap. The gap is equal to the renormalized mass. The ground state energy is expected to grow almost linearly with the number of lattice sites with minute corrections from the Casimir effect. Therefore, the ground state energy can be approximated as \cite{Bordag2001}:

\begin{align}
E_g(L) &= C_0 + C_1 L + C_2 \sum_{h=1}^{\infty}\frac{1}{h^2}K_2\left(C_3 h L\right)\\
&\approx c_0 + c_1 L + \frac{c_2}{L} + \frac{c_3}{L^2} + \frac{c_4}{L^3} + O\left(\frac{1}{L^4}\right)\ ,
\end{align}
where $L$ represents the size of the system or number of sites, $K_2$ is the modified Bessel function of the second kind, and $c_j$ and $C_j$ are constant real numbers that depend on the geometry of the system \cite{Farina2006}. In \cref{Fig: FitDifference} we use all of the previous energy data points to predict the next ground state energy. As  is evident from \cref{Fig: FitDifference}, after some system size our prediction of the energy is well within half of the gap size.

\begin{figure*}
\includegraphics[scale = 0.85]{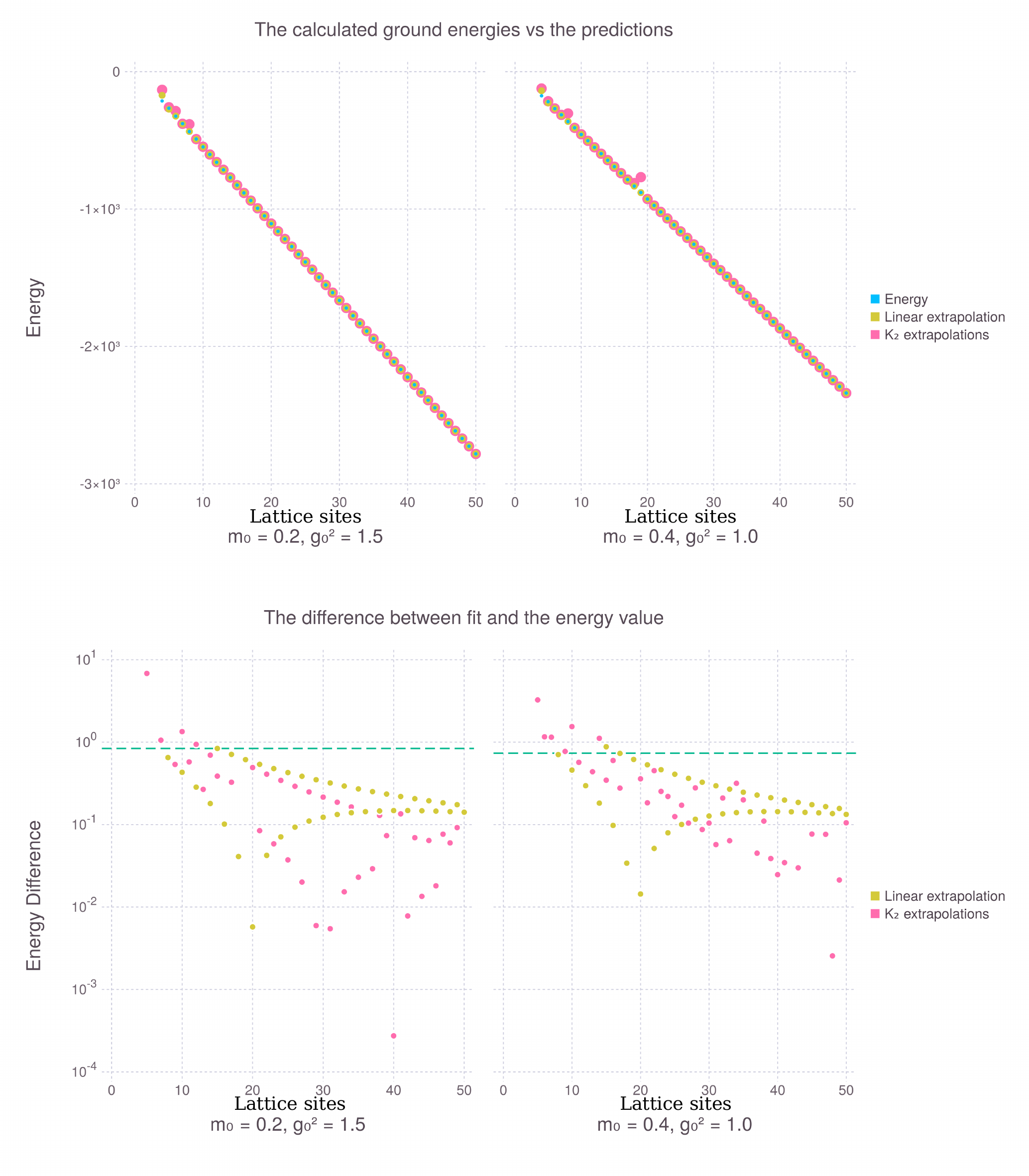}
\centering
\caption{
(color online) For each data point in this figure, we analyze the energies from previous system sizes by fitting a curve to the values, predicting an energy for the next system size. The top plots show the ground state energies next to the predicted energies, and the bottom plots display the difference between our predictions and the ground state energies. In the asymptotic limit of small systems, the Casimir effect can be approximated by $C_2\sum_{h=1}^{10}\frac{1}{h^2}K_2\left(C_3 h L\right)$, where $L$ is the system size and $C_2$ and $C_3$ are some geometrical constants \cite{Farina2006}. As one can see, this theoretical argument fits our numerical data too. The mustard dots represent a linear fit, $f(L)=c_0 + c_1 L$ and the pink dots present the difference between the data and $f(L) = C_0 + C_1 L + C_2 \sum_{h=1}^{10}\frac{1}{h^2}K_2\left(C_3 h L\right)$. The dashed teal lines represent the minimum precision required, which is half of the spectral gap, i.e.\ the renormalized mass. Clearly a linear fit with some estimation for energy density is well within the gap bounds; the higher order fits show the consistency of our calculations. \label{Fig: FitDifference}}
\end{figure*}

\subsubsection{Error analysis for numerical calculations} \label{subsubsec: Error analysis}
For numerical calculations, the quantity we use to measure the precision of the ground state is the following \cite{Schollwock2011a}:
\begin{equation}
\epsilon=\frac{\left|\left\langle \psi\left|H^{2}\right|\psi\right\rangle \right|-\left|\left\langle \psi\left|H\right|\psi\right\rangle \right|^{2}}{\left|\left\langle \psi\left|H\right|\psi\right\rangle \right|^{2}}\, .\label{eq:Def_epsilon}
\end{equation}
The DMRG algorithm stops whenever $\epsilon < \epsilon_{\text{goal}}$ or the bond dimension reaches a maximum value, where $\epsilon_{\text{goal}}$ is the predefined precision goal. We need to know, given a value of $\epsilon$, what the distance between the result of DMRG and the actual ground state is. Let us assume, to the first nonzero order in $\delta$:
\begin{equation}
|\psi\rangle=|g\rangle+\delta|g^{\perp}\rangle\ ,
\end{equation}
where $ H|g\rangle=E_{0}|g\rangle$ represents the ground state, $|g^{\perp}\rangle$ is a state orthogonal to the ground state, and $\delta$ is a small value. We have:
\begin{alignat}{2}
\left|\braket{\psi|H|\psi} \right| & =\left|E_{0}+\delta^{2}\braket{g^{\perp}|H|g^{\perp}} \right|&&\le\left|E_{0}+\delta^{2}\braket{H} \right|\, ,\\
\left|\braket{\psi|H^{2}|\psi} \right| & =\left|E_{0}^{2}+\delta^{2}\braket{g^{\perp}|H^{2}|g^{\perp}} \right|&&\le\left|E_{0}^{2}+\delta^{2}\braket{H^{2}} \right|\, .
\end{alignat}
Substituting these values into \cref{eq:Def_epsilon}, we get:
\begin{equation}
\begin{aligned}
\Rightarrow\epsilon & \approx\delta^{2}\left|\frac{\left\langle H^{2}\right\rangle -2E_{0}\left\langle H\right\rangle }{E_{0}^{2}}\right|\\
& = \delta^{2}\left|\left(\frac{E_\text{max}}{E_{0}}\right)^{2}-2\frac{E_\text{max}}{E_{0}}\right|\\
& = \delta^2\left|\kappa^2-2\kappa\right|\, ,
\end{aligned}
\end{equation}
\begin{equation}
\delta \lessapprox \frac{\sqrt{\epsilon}}{\kappa^2-2\kappa} < \sqrt{\epsilon} \ ,
\end{equation}
where $\kappa$ is the condition number of $H$. $\delta$ is upper bounded by $\sqrt{\epsilon}$, and we use this upper bound in our analysis throughout the paper and in \cref{Fig: TwoPoint} for the error bars.

\section{Conclusion and open problems} \label{sec: Conclusion}

In this paper, we have introduced an algorithm that can help with ground state preparation of fermionic QFTs. In particular, our algorithm performs better than state of the art algorithms \cite{HamedMoosavian2018}, and it can be generalized to any number of spatial dimensions. Specifically, initialization is no longer the bottleneck of fermionic QFT simulation, as its runtime has the same asymptotic scaling as the rest of the algorithm. Overall, it is a humble step towards answering whether the entirety of the Standard Model can be simulated on a universal digital quantum computer.

It is important to note that although our conjecture about $\eta$ can be rigorously proven in a number of cases such as systems where the ground state of the theory has a known topological PEPS representation \cite{Schwarz2011}, whether our algorithm will work for every gapped fermionic system is an open problem.

Also, we believe that the bosonic case needs further investigation and a similar algorithm might work in that case too. One difference one needs to be aware of in that case is that bosonic statistics allow several bosons to occupy the same site and this will necessitate the introduction of a cut-off.

Perhaps a harder open problem to consider is how to generalize these state preparation algorithms to critical systems that lack a mass gap.

\begin{acknowledgments}

  We thank Robin Kothari, Ali Lavasani, Guang Hao Low, Yannick Meurice, John Preskill, Seth Whitsitt, and members of the theoretical Quarks, Hadrons and Nuclei group at the Maryland Center for Fundamental Physics for helpful discussions.  This work was supported in part by the U.S. Department of Energy (DOE) under Award Number DE-SC0019139. JRG acknowledges support by the DoE ASCR FAR-QC (award No.\ DE-SC0020312), DoE BES QIS program (award No.\ DE-SC0019449), NSF PFCQC program, DoE ASCR Quantum Testbed Pathfinder program (award No.\ DE-SC0019040), AFOSR, NSF PFC at JQI, ARO MURI, and ARL CDQI\@. JRG performed his work in part at the Aspen Center for Physics, which is supported by National Science Foundation grant PHY-1607611.
  The authors acknowledge the University of Maryland supercomputing resources made available for conducting the research reported in this paper. This work also used the Extreme Science and Engineering Discovery Environment (XSEDE), which is supported by National Science Foundation grant number ACI-1548562.  Specifically, it used the Bridges system, which is supported by NSF award number ACI-1445606, at the Pittsburgh Supercomputing Center (PSC)~\cite{xsede, bridges}.
\end{acknowledgments}

\bibliography{references}

%apsrev4-2.bst 2019-01-14 (MD) hand-edited version of apsrev4-1.bst
%Control: key (0)
%Control: author (8) initials jnrlst
%Control: editor formatted (1) identically to author
%Control: production of article title (0) allowed
%Control: page (0) single
%Control: year (1) truncated
%Control: production of eprint (0) enabled
\begin{thebibliography}{69}%
\makeatletter
\providecommand \@ifxundefined [1]{%
 \@ifx{#1\undefined}
}%
\providecommand \@ifnum [1]{%
 \ifnum #1\expandafter \@firstoftwo
 \else \expandafter \@secondoftwo
 \fi
}%
\providecommand \@ifx [1]{%
 \ifx #1\expandafter \@firstoftwo
 \else \expandafter \@secondoftwo
 \fi
}%
\providecommand \natexlab [1]{#1}%
\providecommand \enquote  [1]{``#1''}%
\providecommand \bibnamefont  [1]{#1}%
\providecommand \bibfnamefont [1]{#1}%
\providecommand \citenamefont [1]{#1}%
\providecommand \href@noop [0]{\@secondoftwo}%
\providecommand \href [0]{\begingroup \@sanitize@url \@href}%
\providecommand \@href[1]{\@@startlink{#1}\@@href}%
\providecommand \@@href[1]{\endgroup#1\@@endlink}%
\providecommand \@sanitize@url [0]{\catcode `\\12\catcode `\$12\catcode
  `\&12\catcode `\#12\catcode `\^12\catcode `\_12\catcode `\%12\relax}%
\providecommand \@@startlink[1]{}%
\providecommand \@@endlink[0]{}%
\providecommand \url  [0]{\begingroup\@sanitize@url \@url }%
\providecommand \@url [1]{\endgroup\@href {#1}{\urlprefix }}%
\providecommand \urlprefix  [0]{URL }%
\providecommand \Eprint [0]{\href }%
\providecommand \doibase [0]{https://doi.org/}%
\providecommand \selectlanguage [0]{\@gobble}%
\providecommand \bibinfo  [0]{\@secondoftwo}%
\providecommand \bibfield  [0]{\@secondoftwo}%
\providecommand \translation [1]{[#1]}%
\providecommand \BibitemOpen [0]{}%
\providecommand \bibitemStop [0]{}%
\providecommand \bibitemNoStop [0]{.\EOS\space}%
\providecommand \EOS [0]{\spacefactor3000\relax}%
\providecommand \BibitemShut  [1]{\csname bibitem#1\endcsname}%
\let\auto@bib@innerbib\@empty
%</preamble>
\bibitem [{\citenamefont {Feynman}(1982)}]{Feynman1982}%
  \BibitemOpen
  \bibfield  {author} {\bibinfo {author} {\bibfnamefont {R.~P.}\ \bibnamefont
  {Feynman}},\ }\bibfield  {title} {\bibinfo {title} {{Simulating physics with
  computers}},\ }\href {https://doi.org/10.1007/BF02650179} {\bibfield
  {journal} {\bibinfo  {journal} {International Journal of Theoretical
  Physics}\ }\textbf {\bibinfo {volume} {21}},\ \bibinfo {pages} {467}
  (\bibinfo {year} {1982})}\BibitemShut {NoStop}%
\bibitem [{\citenamefont {Feynman}(1985)}]{Feynman1985}%
  \BibitemOpen
  \bibfield  {author} {\bibinfo {author} {\bibfnamefont {R.~P.}\ \bibnamefont
  {Feynman}},\ }\bibfield  {title} {\bibinfo {title} {{Quantum Mechanical
  Computers}},\ }\href {https://doi.org/10.1364/ON.11.2.000011} {\bibfield
  {journal} {\bibinfo  {journal} {Optics News}\ }\textbf {\bibinfo {volume}
  {11}},\ \bibinfo {pages} {11} (\bibinfo {year} {1985})}\BibitemShut {NoStop}%
\bibitem [{\citenamefont {Jordan}\ \emph {et~al.}(2018)\citenamefont {Jordan},
  \citenamefont {Krovi}, \citenamefont {Lee},\ and\ \citenamefont
  {Preskill}}]{Jordan2018}%
  \BibitemOpen
  \bibfield  {author} {\bibinfo {author} {\bibfnamefont {S.~P.}\ \bibnamefont
  {Jordan}}, \bibinfo {author} {\bibfnamefont {H.}~\bibnamefont {Krovi}},
  \bibinfo {author} {\bibfnamefont {K.~S.~M.}\ \bibnamefont {Lee}},\ and\
  \bibinfo {author} {\bibfnamefont {J.}~\bibnamefont {Preskill}},\ }\bibfield
  {title} {\bibinfo {title} {{BQP-completeness of scattering in scalar quantum
  field theory}},\ }\href {https://doi.org/10.22331/q-2018-01-08-44} {\bibfield
   {journal} {\bibinfo  {journal} {Quantum}\ }\textbf {\bibinfo {volume} {2}},\
  \bibinfo {pages} {44} (\bibinfo {year} {2018})},\ \Eprint
  {https://arxiv.org/abs/1703.00454} {arXiv:1703.00454} \BibitemShut {NoStop}%
\bibitem [{\citenamefont {Christof~Gattringer}(2010)}]{Gattringer2010}%
  \BibitemOpen
  \bibfield  {author} {\bibinfo {author} {\bibfnamefont {C.~B.~L.}\
  \bibnamefont {Christof~Gattringer}},\ }\href@noop {} {\emph {\bibinfo {title}
  {Quantum chromodynamics on the lattice: an introductory presentation}}},\
  \bibinfo {edition} {1st}\ ed.,\ Lecture notes in physics 788\ (\bibinfo
  {publisher} {Springer-Verlag Berlin Heidelberg},\ \bibinfo {year}
  {2010})\BibitemShut {NoStop}%
\bibitem [{\citenamefont {Preskill}(2018)}]{Preskill2018}%
  \BibitemOpen
  \bibfield  {author} {\bibinfo {author} {\bibfnamefont {J.}~\bibnamefont
  {Preskill}},\ }\href {http://arxiv.org/abs/1811.10085} {\bibinfo {title}
  {{Simulating quantum field theory with a quantum computer}}} (\bibinfo {year}
  {2018}),\ \Eprint {https://arxiv.org/abs/1811.10085} {arXiv:1811.10085}
  \BibitemShut {NoStop}%
\bibitem [{\citenamefont {Dalmonte}\ and\ \citenamefont
  {Montangero}(2016)}]{Dalmonte2016}%
  \BibitemOpen
  \bibfield  {author} {\bibinfo {author} {\bibfnamefont {M.}~\bibnamefont
  {Dalmonte}}\ and\ \bibinfo {author} {\bibfnamefont {S.}~\bibnamefont
  {Montangero}},\ }\bibfield  {title} {\bibinfo {title} {{Lattice gauge theory
  simulations in the quantum information era}},\ }\href
  {https://doi.org/10.1080/00107514.2016.1151199} {\bibfield  {journal}
  {\bibinfo  {journal} {Contemporary Physics}\ }\textbf {\bibinfo {volume}
  {57}},\ \bibinfo {pages} {388} (\bibinfo {year} {2016})}\BibitemShut
  {NoStop}%
\bibitem [{\citenamefont {Cirac}\ and\ \citenamefont
  {Zoller}(2012)}]{Cirac2012}%
  \BibitemOpen
  \bibfield  {author} {\bibinfo {author} {\bibfnamefont {J.~I.}\ \bibnamefont
  {Cirac}}\ and\ \bibinfo {author} {\bibfnamefont {P.}~\bibnamefont {Zoller}},\
  }\bibfield  {title} {\bibinfo {title} {{Goals and opportunities in quantum
  simulation}},\ }\href {https://doi.org/10.1038/nphys2275} {\bibfield
  {journal} {\bibinfo  {journal} {Nature Physics}\ }\textbf {\bibinfo {volume}
  {8}},\ \bibinfo {pages} {264} (\bibinfo {year} {2012})}\BibitemShut {NoStop}%
\bibitem [{\citenamefont {Bloch}\ \emph {et~al.}(2012)\citenamefont {Bloch},
  \citenamefont {Dalibard},\ and\ \citenamefont
  {Nascimb{\`{e}}ne}}]{Bloch2012}%
  \BibitemOpen
  \bibfield  {author} {\bibinfo {author} {\bibfnamefont {I.}~\bibnamefont
  {Bloch}}, \bibinfo {author} {\bibfnamefont {J.}~\bibnamefont {Dalibard}},\
  and\ \bibinfo {author} {\bibfnamefont {S.}~\bibnamefont {Nascimb{\`{e}}ne}},\
  }\href {https://doi.org/10.1038/nphys2259} {\bibinfo {title} {{Quantum
  simulations with ultracold quantum gases}}} (\bibinfo {year}
  {2012})\BibitemShut {NoStop}%
\bibitem [{\citenamefont {Blatt}\ and\ \citenamefont {Roos}(2012)}]{Blatt2012}%
  \BibitemOpen
  \bibfield  {author} {\bibinfo {author} {\bibfnamefont {R.}~\bibnamefont
  {Blatt}}\ and\ \bibinfo {author} {\bibfnamefont {C.~F.}\ \bibnamefont
  {Roos}},\ }\href {https://doi.org/10.1038/nphys2252} {\bibinfo {title}
  {{Quantum simulations with trapped ions}}} (\bibinfo {year}
  {2012})\BibitemShut {NoStop}%
\bibitem [{\citenamefont {Glaetzle}\ \emph {et~al.}(2014)\citenamefont
  {Glaetzle}, \citenamefont {Dalmonte}, \citenamefont {Nath}, \citenamefont
  {Rousochatzakis}, \citenamefont {Moessner},\ and\ \citenamefont
  {Zoller}}]{Glaetzle2014}%
  \BibitemOpen
  \bibfield  {author} {\bibinfo {author} {\bibfnamefont {A.~W.}\ \bibnamefont
  {Glaetzle}}, \bibinfo {author} {\bibfnamefont {M.}~\bibnamefont {Dalmonte}},
  \bibinfo {author} {\bibfnamefont {R.}~\bibnamefont {Nath}}, \bibinfo {author}
  {\bibfnamefont {I.}~\bibnamefont {Rousochatzakis}}, \bibinfo {author}
  {\bibfnamefont {R.}~\bibnamefont {Moessner}},\ and\ \bibinfo {author}
  {\bibfnamefont {P.}~\bibnamefont {Zoller}},\ }\bibfield  {title} {\bibinfo
  {title} {{Quantum spin-ice and dimer models with Rydberg atoms}},\ }\href
  {https://doi.org/10.1103/PhysRevX.4.041037} {\bibfield  {journal} {\bibinfo
  {journal} {Physical Review X}\ }\textbf {\bibinfo {volume} {4}},\ \bibinfo
  {pages} {041037} (\bibinfo {year} {2014})}\BibitemShut {NoStop}%
\bibitem [{\citenamefont {Gonz{\'{a}}lez-Cuadra}\ \emph
  {et~al.}(2017)\citenamefont {Gonz{\'{a}}lez-Cuadra}, \citenamefont {Zohar},\
  and\ \citenamefont {Cirac}}]{Gonzalez-Cuadra2017}%
  \BibitemOpen
  \bibfield  {author} {\bibinfo {author} {\bibfnamefont {D.}~\bibnamefont
  {Gonz{\'{a}}lez-Cuadra}}, \bibinfo {author} {\bibfnamefont {E.}~\bibnamefont
  {Zohar}},\ and\ \bibinfo {author} {\bibfnamefont {J.~I.}\ \bibnamefont
  {Cirac}},\ }\bibfield  {title} {\bibinfo {title} {{Quantum simulation of the
  Abelian-Higgs lattice gauge theory with ultracold atoms}},\ }\href
  {https://doi.org/10.1088/1367-2630/aa6f37} {\bibfield  {journal} {\bibinfo
  {journal} {New Journal of Physics}\ }\textbf {\bibinfo {volume} {19}},\
  \bibinfo {pages} {063038} (\bibinfo {year} {2017})}\BibitemShut {NoStop}%
\bibitem [{\citenamefont {Kasper}\ \emph {et~al.}(2017)\citenamefont {Kasper},
  \citenamefont {Hebenstreit}, \citenamefont {Jendrzejewski}, \citenamefont
  {Oberthaler},\ and\ \citenamefont {Berges}}]{Kasper2017}%
  \BibitemOpen
  \bibfield  {author} {\bibinfo {author} {\bibfnamefont {V.}~\bibnamefont
  {Kasper}}, \bibinfo {author} {\bibfnamefont {F.}~\bibnamefont {Hebenstreit}},
  \bibinfo {author} {\bibfnamefont {F.}~\bibnamefont {Jendrzejewski}}, \bibinfo
  {author} {\bibfnamefont {M.~K.}\ \bibnamefont {Oberthaler}},\ and\ \bibinfo
  {author} {\bibfnamefont {J.}~\bibnamefont {Berges}},\ }\bibfield  {title}
  {\bibinfo {title} {{Implementing quantum electrodynamics with ultracold
  atomic systems}},\ }\href {https://doi.org/10.1088/1367-2630/aa54e0}
  {\bibfield  {journal} {\bibinfo  {journal} {New Journal of Physics}\ }\textbf
  {\bibinfo {volume} {19}},\ \bibinfo {pages} {023030} (\bibinfo {year}
  {2017})},\ \Eprint {https://arxiv.org/abs/1608.03480} {arXiv:1608.03480}
  \BibitemShut {NoStop}%
\bibitem [{\citenamefont {Banerjee}\ \emph {et~al.}(2012)\citenamefont
  {Banerjee}, \citenamefont {B{\"{o}}gli}, \citenamefont {Dalmonte},
  \citenamefont {Rico}, \citenamefont {Stebler}, \citenamefont {Wiese},\ and\
  \citenamefont {Zoller}}]{Banerjee2012}%
  \BibitemOpen
  \bibfield  {author} {\bibinfo {author} {\bibfnamefont {D.}~\bibnamefont
  {Banerjee}}, \bibinfo {author} {\bibfnamefont {M.}~\bibnamefont
  {B{\"{o}}gli}}, \bibinfo {author} {\bibfnamefont {M.}~\bibnamefont
  {Dalmonte}}, \bibinfo {author} {\bibfnamefont {E.}~\bibnamefont {Rico}},
  \bibinfo {author} {\bibfnamefont {P.}~\bibnamefont {Stebler}}, \bibinfo
  {author} {\bibfnamefont {U.~J.}\ \bibnamefont {Wiese}},\ and\ \bibinfo
  {author} {\bibfnamefont {P.}~\bibnamefont {Zoller}},\ }\bibfield  {title}
  {\bibinfo {title} {{Atomic Quantum Simulation of U(N) and SU(N) Non-Abelian
  Lattice Gauge Theories}},\ }\href
  {https://doi.org/10.1103/PhysRevLett.110.125303} {\bibfield  {journal}
  {\bibinfo  {journal} {Physical Review Letters}\ }\textbf {\bibinfo {volume}
  {110}},\ \bibinfo {pages} {125303} (\bibinfo {year} {2012})},\ \Eprint
  {https://arxiv.org/abs/1211.2242} {arXiv:1211.2242} \BibitemShut {NoStop}%
\bibitem [{\citenamefont {Rico}\ \emph {et~al.}(2018)\citenamefont {Rico},
  \citenamefont {Dalmonte}, \citenamefont {Zoller}, \citenamefont {Banerjee},
  \citenamefont {B{\"{o}}gli}, \citenamefont {Stebler},\ and\ \citenamefont
  {Wiese}}]{Rico2018}%
  \BibitemOpen
  \bibfield  {author} {\bibinfo {author} {\bibfnamefont {E.}~\bibnamefont
  {Rico}}, \bibinfo {author} {\bibfnamefont {M.}~\bibnamefont {Dalmonte}},
  \bibinfo {author} {\bibfnamefont {P.}~\bibnamefont {Zoller}}, \bibinfo
  {author} {\bibfnamefont {D.}~\bibnamefont {Banerjee}}, \bibinfo {author}
  {\bibfnamefont {M.}~\bibnamefont {B{\"{o}}gli}}, \bibinfo {author}
  {\bibfnamefont {P.}~\bibnamefont {Stebler}},\ and\ \bibinfo {author}
  {\bibfnamefont {U.~J.}\ \bibnamefont {Wiese}},\ }\bibfield  {title} {\bibinfo
  {title} {{SO(3) “Nuclear Physics” with ultracold Gases}},\ }\href
  {https://doi.org/10.1016/j.aop.2018.03.020} {\bibfield  {journal} {\bibinfo
  {journal} {Annals of Physics}\ }\textbf {\bibinfo {volume} {393}},\ \bibinfo
  {pages} {466} (\bibinfo {year} {2018})},\ \Eprint
  {https://arxiv.org/abs/1802.00022} {arXiv:1802.00022} \BibitemShut {NoStop}%
\bibitem [{\citenamefont {Zohar}\ and\ \citenamefont
  {Burrello}(2015)}]{Zohar2015}%
  \BibitemOpen
  \bibfield  {author} {\bibinfo {author} {\bibfnamefont {E.}~\bibnamefont
  {Zohar}}\ and\ \bibinfo {author} {\bibfnamefont {M.}~\bibnamefont
  {Burrello}},\ }\bibfield  {title} {\bibinfo {title} {{Formulation of lattice
  gauge theories for quantum simulations}},\ }\href
  {https://doi.org/10.1103/PhysRevD.91.054506} {\bibfield  {journal} {\bibinfo
  {journal} {Physical Review D}\ }\textbf {\bibinfo {volume} {054506}},\
  \bibinfo {pages} {1} (\bibinfo {year} {2015})},\ \Eprint
  {https://arxiv.org/abs/1409.3085} {arXiv:1409.3085} \BibitemShut {NoStop}%
\bibitem [{\citenamefont {Bernien}\ \emph {et~al.}(2017)\citenamefont
  {Bernien}, \citenamefont {Schwartz}, \citenamefont {Keesling}, \citenamefont
  {Levine}, \citenamefont {Omran}, \citenamefont {Pichler}, \citenamefont
  {Choi}, \citenamefont {Zibrov}, \citenamefont {Endres}, \citenamefont
  {Greiner}, \citenamefont {Vuletic},\ and\ \citenamefont
  {Lukin}}]{Bernien2017}%
  \BibitemOpen
  \bibfield  {author} {\bibinfo {author} {\bibfnamefont {H.}~\bibnamefont
  {Bernien}}, \bibinfo {author} {\bibfnamefont {S.}~\bibnamefont {Schwartz}},
  \bibinfo {author} {\bibfnamefont {A.}~\bibnamefont {Keesling}}, \bibinfo
  {author} {\bibfnamefont {H.}~\bibnamefont {Levine}}, \bibinfo {author}
  {\bibfnamefont {A.}~\bibnamefont {Omran}}, \bibinfo {author} {\bibfnamefont
  {H.}~\bibnamefont {Pichler}}, \bibinfo {author} {\bibfnamefont
  {S.}~\bibnamefont {Choi}}, \bibinfo {author} {\bibfnamefont {A.~S.}\
  \bibnamefont {Zibrov}}, \bibinfo {author} {\bibfnamefont {M.}~\bibnamefont
  {Endres}}, \bibinfo {author} {\bibfnamefont {M.}~\bibnamefont {Greiner}},
  \bibinfo {author} {\bibfnamefont {V.}~\bibnamefont {Vuletic}},\ and\ \bibinfo
  {author} {\bibfnamefont {M.~D.}\ \bibnamefont {Lukin}},\ }\bibfield  {title}
  {\bibinfo {title} {{Probing many-body dynamics on a 51-atom quantum
  simulator}},\ }\href {https://doi.org/10.1038/nature24622} {\bibfield
  {journal} {\bibinfo  {journal} {Nature}\ }\textbf {\bibinfo {volume} {551}},\
  \bibinfo {pages} {579} (\bibinfo {year} {2017})},\ \Eprint
  {https://arxiv.org/abs/1707.04344} {arXiv:1707.04344} \BibitemShut {NoStop}%
\bibitem [{\citenamefont {Zhang}\ \emph {et~al.}(2017)\citenamefont {Zhang},
  \citenamefont {Pagano}, \citenamefont {Hess}, \citenamefont {Kyprianidis},
  \citenamefont {Becker}, \citenamefont {Kaplan}, \citenamefont {Gorshkov},
  \citenamefont {Gong},\ and\ \citenamefont {Monroe}}]{Zhang2017}%
  \BibitemOpen
  \bibfield  {author} {\bibinfo {author} {\bibfnamefont {J.}~\bibnamefont
  {Zhang}}, \bibinfo {author} {\bibfnamefont {G.}~\bibnamefont {Pagano}},
  \bibinfo {author} {\bibfnamefont {P.~W.}\ \bibnamefont {Hess}}, \bibinfo
  {author} {\bibfnamefont {A.}~\bibnamefont {Kyprianidis}}, \bibinfo {author}
  {\bibfnamefont {P.}~\bibnamefont {Becker}}, \bibinfo {author} {\bibfnamefont
  {H.}~\bibnamefont {Kaplan}}, \bibinfo {author} {\bibfnamefont {A.~V.}\
  \bibnamefont {Gorshkov}}, \bibinfo {author} {\bibfnamefont {Z.~X.}\
  \bibnamefont {Gong}},\ and\ \bibinfo {author} {\bibfnamefont
  {C.}~\bibnamefont {Monroe}},\ }\bibfield  {title} {\bibinfo {title}
  {{Observation of a Many-Body Dynamical Phase Transition with a 53-Qubit
  Quantum Simulator}},\ }\href {https://doi.org/10.1038/nature24654} {\bibfield
   {journal} {\bibinfo  {journal} {Nature}\ }\textbf {\bibinfo {volume}
  {551}},\ \bibinfo {pages} {601} (\bibinfo {year} {2017})},\ \Eprint
  {https://arxiv.org/abs/1708.01044} {arXiv:1708.01044} \BibitemShut {NoStop}%
\bibitem [{\citenamefont {Deutsch}(1985)}]{Deutsch85}%
  \BibitemOpen
  \bibfield  {author} {\bibinfo {author} {\bibfnamefont {D.}~\bibnamefont
  {Deutsch}},\ }\bibfield  {title} {\bibinfo {title} {Quantum theory, the
  {Church-Turing} printicple and the universal quantum computer},\ }\href@noop
  {} {\bibfield  {journal} {\bibinfo  {journal} {Proceedings of the Royal
  Society of London A}\ }\textbf {\bibinfo {volume} {400}},\ \bibinfo {pages}
  {97} (\bibinfo {year} {1985})}\BibitemShut {NoStop}%
\bibitem [{\citenamefont {Lloyd}(1996)}]{Lloyd1996}%
  \BibitemOpen
  \bibfield  {author} {\bibinfo {author} {\bibfnamefont {S.}~\bibnamefont
  {Lloyd}},\ }\bibfield  {title} {\bibinfo {title} {{Universal Quantum
  Simulators}},\ }\href {https://doi.org/10.1126/SCIENCE.273.5278.1073}
  {\bibfield  {journal} {\bibinfo  {journal} {Science (New York, N.Y.)}\
  }\textbf {\bibinfo {volume} {273}},\ \bibinfo {pages} {1073} (\bibinfo {year}
  {1996})}\BibitemShut {NoStop}%
\bibitem [{\citenamefont {Abrams}\ and\ \citenamefont
  {Lloyd}(1997)}]{AbramsLloyd}%
  \BibitemOpen
  \bibfield  {author} {\bibinfo {author} {\bibfnamefont {D.~S.}\ \bibnamefont
  {Abrams}}\ and\ \bibinfo {author} {\bibfnamefont {S.}~\bibnamefont {Lloyd}},\
  }\bibfield  {title} {\bibinfo {title} {Simulation of many-body {F}ermi
  systems on a universal quantum computer},\ }\href@noop {} {\bibfield
  {journal} {\bibinfo  {journal} {Physical Review Letters}\ }\textbf {\bibinfo
  {volume} {79}},\ \bibinfo {pages} {2586} (\bibinfo {year} {1997})},\ \bibinfo
  {note} {arXiv:quant-ph/9703054}\BibitemShut {NoStop}%
\bibitem [{\citenamefont {Zalka}(1998)}]{Zalka}%
  \BibitemOpen
  \bibfield  {author} {\bibinfo {author} {\bibfnamefont {C.}~\bibnamefont
  {Zalka}},\ }\bibfield  {title} {\bibinfo {title} {Efficient simulation of
  quantum systems by quantum computers},\ }\href@noop {} {\bibfield  {journal}
  {\bibinfo  {journal} {Proceedings of the Royal Society of London A}\ }\textbf
  {\bibinfo {volume} {454}},\ \bibinfo {pages} {313} (\bibinfo {year}
  {1998})},\ \bibinfo {note} {arXiv:quant-ph/9603026}\BibitemShut {NoStop}%
\bibitem [{\citenamefont {Jordan}\ \emph {et~al.}(2012)\citenamefont {Jordan},
  \citenamefont {Lee},\ and\ \citenamefont {Preskill}}]{Jordan2012}%
  \BibitemOpen
  \bibfield  {author} {\bibinfo {author} {\bibfnamefont {S.~P.}\ \bibnamefont
  {Jordan}}, \bibinfo {author} {\bibfnamefont {K.~S.~M.}\ \bibnamefont {Lee}},\
  and\ \bibinfo {author} {\bibfnamefont {J.}~\bibnamefont {Preskill}},\
  }\bibfield  {title} {\bibinfo {title} {{Quantum Algorithms for Quantum Field
  Theories}},\ }\href {https://doi.org/10.1126/science.1217069} {\bibfield
  {journal} {\bibinfo  {journal} {Science}\ }\textbf {\bibinfo {volume}
  {336}},\ \bibinfo {pages} {1130} (\bibinfo {year} {2012})},\ \Eprint
  {https://arxiv.org/abs/1111.3633} {arXiv:1111.3633} \BibitemShut {NoStop}%
\bibitem [{\citenamefont {Jordan}\ \emph {et~al.}(2014)\citenamefont {Jordan},
  \citenamefont {Lee},\ and\ \citenamefont {Preskill}}]{Jordan2014}%
  \BibitemOpen
  \bibfield  {author} {\bibinfo {author} {\bibfnamefont {S.~P.}\ \bibnamefont
  {Jordan}}, \bibinfo {author} {\bibfnamefont {K.~S.~M.}\ \bibnamefont {Lee}},\
  and\ \bibinfo {author} {\bibfnamefont {J.}~\bibnamefont {Preskill}},\
  }\bibfield  {title} {\bibinfo {title} {{Quantum Algorithms for Fermionic
  Quantum Field Theories}},\ }\Eprint {https://arxiv.org/abs/1404.7115}
  {arXiv:1404.7115}  (\bibinfo {year} {2014}),\ \bibinfo {note}
  {arXiv:1404.7115}\BibitemShut {NoStop}%
\bibitem [{\citenamefont {{Hamed Moosavian}}\ and\ \citenamefont
  {Jordan}(2018)}]{HamedMoosavian2018}%
  \BibitemOpen
  \bibfield  {author} {\bibinfo {author} {\bibfnamefont {A.}~\bibnamefont
  {{Hamed Moosavian}}}\ and\ \bibinfo {author} {\bibfnamefont {S.}~\bibnamefont
  {Jordan}},\ }\bibfield  {title} {\bibinfo {title} {{Faster quantum algorithm
  to simulate fermionic quantum field theory}},\ }\href
  {https://doi.org/10.1103/PhysRevA.98.012332} {\bibfield  {journal} {\bibinfo
  {journal} {Physical Review A}\ }\textbf {\bibinfo {volume} {98}},\ \bibinfo
  {pages} {012332} (\bibinfo {year} {2018})},\ \Eprint
  {https://arxiv.org/abs/1711.04006} {arXiv:1711.04006} \BibitemShut {NoStop}%
\bibitem [{\citenamefont {Klco}\ and\ \citenamefont
  {Savage}(2019{\natexlab{a}})}]{Klco2018a}%
  \BibitemOpen
  \bibfield  {author} {\bibinfo {author} {\bibfnamefont {N.}~\bibnamefont
  {Klco}}\ and\ \bibinfo {author} {\bibfnamefont {M.~J.}\ \bibnamefont
  {Savage}},\ }\bibfield  {title} {\bibinfo {title} {{Digitization of scalar
  fields for quantum computing}},\ }\href
  {https://doi.org/10.1103/PhysRevA.99.052335} {\bibfield  {journal} {\bibinfo
  {journal} {Physical Review A}\ }\textbf {\bibinfo {volume} {99}},\ \bibinfo
  {pages} {052335} (\bibinfo {year} {2019}{\natexlab{a}})},\ \Eprint
  {https://arxiv.org/abs/1808.10378} {arXiv:1808.10378} \BibitemShut {NoStop}%
\bibitem [{\citenamefont {Klco}\ and\ \citenamefont
  {Savage}(2019{\natexlab{b}})}]{Klco2019}%
  \BibitemOpen
  \bibfield  {author} {\bibinfo {author} {\bibfnamefont {N.}~\bibnamefont
  {Klco}}\ and\ \bibinfo {author} {\bibfnamefont {M.~J.}\ \bibnamefont
  {Savage}},\ }\bibfield  {title} {\bibinfo {title} {{Minimally-Entangled State
  Preparation of Localized Wavefunctions on Quantum Computers}},\ }\Eprint
  {https://arxiv.org/abs/1904.10440} {arXiv:1904.10440}  (\bibinfo {year}
  {2019}{\natexlab{b}}),\ \bibinfo {note} {arXiv:1904.10440
  [quant-ph]}\BibitemShut {NoStop}%
\bibitem [{\citenamefont {Low}\ and\ \citenamefont {Chuang}(2016)}]{Low2016a}%
  \BibitemOpen
  \bibfield  {author} {\bibinfo {author} {\bibfnamefont {G.~H.}\ \bibnamefont
  {Low}}\ and\ \bibinfo {author} {\bibfnamefont {I.~L.}\ \bibnamefont
  {Chuang}},\ }\bibfield  {title} {\bibinfo {title} {{Optimal Hamiltonian
  Simulation by Quantum Signal Processing}},\ }\href
  {https://doi.org/10.1103/PhysRevLett.118.010501} {\bibfield  {journal}
  {\bibinfo  {journal} {Physical Review Letters}\ }\textbf {\bibinfo {volume}
  {118}},\ \bibinfo {pages} {010501} (\bibinfo {year} {2016})},\ \Eprint
  {https://arxiv.org/abs/1606.02685} {arXiv:1606.02685} \BibitemShut {NoStop}%
\bibitem [{\citenamefont {Haah}\ \emph {et~al.}(2018)\citenamefont {Haah},
  \citenamefont {Hastings}, \citenamefont {Kothari},\ and\ \citenamefont
  {Low}}]{Haah2018}%
  \BibitemOpen
  \bibfield  {author} {\bibinfo {author} {\bibfnamefont {J.}~\bibnamefont
  {Haah}}, \bibinfo {author} {\bibfnamefont {M.~B.}\ \bibnamefont {Hastings}},
  \bibinfo {author} {\bibfnamefont {R.}~\bibnamefont {Kothari}},\ and\ \bibinfo
  {author} {\bibfnamefont {G.~H.}\ \bibnamefont {Low}},\ }\bibfield  {title}
  {\bibinfo {title} {{Quantum algorithm for simulating real time evolution of
  lattice Hamiltonians}},\ }\href {https://doi.org/10.1109/FOCS.2018.00041}
  {\bibfield  {journal} {\bibinfo  {journal} {2018 IEEE 59th Annual Symposium
  on Foundations of Computer Science (FOCS)}\ }\textbf {\bibinfo {volume}
  {2018-Octob}},\ \bibinfo {pages} {350} (\bibinfo {year} {2018})},\ \Eprint
  {https://arxiv.org/abs/1801.03922} {arXiv:1801.03922} \BibitemShut {NoStop}%
\bibitem [{\citenamefont {Berry}\ \emph {et~al.}(2014)\citenamefont {Berry},
  \citenamefont {Childs}, \citenamefont {Cleve}, \citenamefont {Kothari},\ and\
  \citenamefont {Somma}}]{Berry}%
  \BibitemOpen
  \bibfield  {author} {\bibinfo {author} {\bibfnamefont {D.~W.}\ \bibnamefont
  {Berry}}, \bibinfo {author} {\bibfnamefont {A.~M.}\ \bibnamefont {Childs}},
  \bibinfo {author} {\bibfnamefont {R.}~\bibnamefont {Cleve}}, \bibinfo
  {author} {\bibfnamefont {R.}~\bibnamefont {Kothari}},\ and\ \bibinfo {author}
  {\bibfnamefont {R.~D.}\ \bibnamefont {Somma}},\ }\bibfield  {title} {\bibinfo
  {title} {{Exponential improvement in precision for simulating sparse
  Hamiltonians}},\ }in\ \href {https://doi.org/10.1145/2591796.2591854} {\emph
  {\bibinfo {booktitle} {Proceedings of the 46th Annual ACM Symposium on Theory
  of Computing - STOC '14}}}\ (\bibinfo  {publisher} {ACM Press},\ \bibinfo
  {address} {New York, New York, USA},\ \bibinfo {year} {2014})\ pp.\ \bibinfo
  {pages} {283--292}\BibitemShut {NoStop}%
\bibitem [{Note1()}]{Note1}%
  \BibitemOpen
  \bibinfo {note} {It is important to note that results like \cite {Schuch2011}
  that imply gapped one dimensional systems are in the same phase and therefore
  can be prepared in constant time, do not apply here. In particular they
  assume that the ratio between correlation length of the system and lattice
  spacing does not increase with adding more sites to the system. However, in
  these simulation algorithms one typically assumes system size to be fixed and
  lattice step decreasing with more sites.}\BibitemShut {Stop}%
\bibitem [{\citenamefont {Gross}\ and\ \citenamefont
  {Neveu}(1974)}]{Gross1974a}%
  \BibitemOpen
  \bibfield  {author} {\bibinfo {author} {\bibfnamefont {D.~J.}\ \bibnamefont
  {Gross}}\ and\ \bibinfo {author} {\bibfnamefont {A.}~\bibnamefont {Neveu}},\
  }\bibfield  {title} {\bibinfo {title} {{Dynamical symmetry breaking in
  asymptotically free field theories}},\ }\href
  {https://doi.org/10.1103/PhysRevD.10.3235} {\bibfield  {journal} {\bibinfo
  {journal} {Physical Review D}\ }\textbf {\bibinfo {volume} {10}},\ \bibinfo
  {pages} {3235} (\bibinfo {year} {1974})}\BibitemShut {NoStop}%
\bibitem [{\citenamefont {Schwarz}\ \emph {et~al.}(2011)\citenamefont
  {Schwarz}, \citenamefont {Temme},\ and\ \citenamefont
  {Verstraete}}]{Schwarz2011}%
  \BibitemOpen
  \bibfield  {author} {\bibinfo {author} {\bibfnamefont {M.}~\bibnamefont
  {Schwarz}}, \bibinfo {author} {\bibfnamefont {K.}~\bibnamefont {Temme}},\
  and\ \bibinfo {author} {\bibfnamefont {F.}~\bibnamefont {Verstraete}},\
  }\bibfield  {title} {\bibinfo {title} {{Preparing projected entangled pair
  states on a quantum computer}},\ }\href
  {https://doi.org/10.1103/PhysRevLett.108.110502} {\bibfield  {journal}
  {\bibinfo  {journal} {Physical Review Letters}\ }\textbf {\bibinfo {volume}
  {108}},\ \bibinfo {pages} {110502} (\bibinfo {year} {2011})},\ \Eprint
  {https://arxiv.org/abs/1104.1410} {arXiv:1104.1410} \BibitemShut {NoStop}%
\bibitem [{\citenamefont {Aharonov}\ and\ \citenamefont
  {Ta‐Shma}(2007)}]{Aharonov2007}%
  \BibitemOpen
  \bibfield  {author} {\bibinfo {author} {\bibfnamefont {D.}~\bibnamefont
  {Aharonov}}\ and\ \bibinfo {author} {\bibfnamefont {A.}~\bibnamefont
  {Ta‐Shma}},\ }\bibfield  {title} {\bibinfo {title} {{Adiabatic Quantum
  State Generation}},\ }\href {https://doi.org/10.1137/060648829} {\bibfield
  {journal} {\bibinfo  {journal} {SIAM Journal on Computing}\ }\textbf
  {\bibinfo {volume} {37}},\ \bibinfo {pages} {47} (\bibinfo {year} {2007})},\
  \Eprint {https://arxiv.org/abs/0301023v2} {arXiv:0301023v2 [arXiv:quant-ph]}
  \BibitemShut {NoStop}%
\bibitem [{\citenamefont {Nielsen}\ and\ \citenamefont
  {Chuang}(2000)}]{Nielsen2000}%
  \BibitemOpen
  \bibfield  {author} {\bibinfo {author} {\bibfnamefont {M.~A.}\ \bibnamefont
  {Nielsen}}\ and\ \bibinfo {author} {\bibfnamefont {I.~L.}\ \bibnamefont
  {Chuang}},\ }\href
  {https://www.cambridge.org/us/academic/subjects/physics/quantum-physics-quantum-information-and-quantum-computation/quantum-computation-and-quantum-information-10th-anniversary-edition?format=HB}
  {\emph {\bibinfo {title} {{Quantum Computation and Quantum Information}}}}\
  (\bibinfo  {publisher} {Cambridge University Press},\ \bibinfo {year}
  {2000})\BibitemShut {NoStop}%
\bibitem [{\citenamefont {Chernoff}(1952)}]{Chernoff1952}%
  \BibitemOpen
  \bibfield  {author} {\bibinfo {author} {\bibfnamefont {H.}~\bibnamefont
  {Chernoff}},\ }\bibfield  {title} {\bibinfo {title} {{A Measure of Asymptotic
  Efficiency for Tests of a Hypothesis Based on the sum of Observations}},\
  }\href {https://doi.org/10.1214/aoms/1177729330} {\bibfield  {journal}
  {\bibinfo  {journal} {The Annals of Mathematical Statistics}\ }\textbf
  {\bibinfo {volume} {23}},\ \bibinfo {pages} {493} (\bibinfo {year}
  {1952})}\BibitemShut {NoStop}%
\bibitem [{\citenamefont {Chernoff}(1981)}]{Chernoff1981}%
  \BibitemOpen
  \bibfield  {author} {\bibinfo {author} {\bibfnamefont {H.}~\bibnamefont
  {Chernoff}},\ }\bibfield  {title} {\bibinfo {title} {{A Note on an Inequality
  Involving the Normal Distribution}},\ }\href
  {https://doi.org/10.1214/aop/1176994428} {\bibfield  {journal} {\bibinfo
  {journal} {The Annals of Probability}\ }\textbf {\bibinfo {volume} {9}},\
  \bibinfo {pages} {533} (\bibinfo {year} {1981})}\BibitemShut {NoStop}%
\bibitem [{\citenamefont {Nielsen}(2011)}]{Nielsen2011}%
  \BibitemOpen
  \bibfield  {author} {\bibinfo {author} {\bibfnamefont {F.}~\bibnamefont
  {Nielsen}},\ }\bibfield  {title} {\bibinfo {title} {{Chernoff information of
  exponential families}},\ }\Eprint {https://arxiv.org/abs/1102.2684}
  {arXiv:1102.2684}  (\bibinfo {year} {2011})\BibitemShut {NoStop}%
\bibitem [{\citenamefont {Childs}\ and\ \citenamefont {Su}(2019)}]{Childs2019}%
  \BibitemOpen
  \bibfield  {author} {\bibinfo {author} {\bibfnamefont {A.~M.}\ \bibnamefont
  {Childs}}\ and\ \bibinfo {author} {\bibfnamefont {Y.}~\bibnamefont {Su}},\
  }\bibfield  {title} {\bibinfo {title} {{Nearly optimal lattice simulation by
  product formulas}},\ }\Eprint {https://arxiv.org/abs/1901.00564}
  {arXiv:1901.00564}  (\bibinfo {year} {2019})\BibitemShut {NoStop}%
\bibitem [{\citenamefont {Yoder}\ \emph {et~al.}(2014)\citenamefont {Yoder},
  \citenamefont {Low},\ and\ \citenamefont {Chuang}}]{Yoder2014}%
  \BibitemOpen
  \bibfield  {author} {\bibinfo {author} {\bibfnamefont {T.~J.}\ \bibnamefont
  {Yoder}}, \bibinfo {author} {\bibfnamefont {G.~H.}\ \bibnamefont {Low}},\
  and\ \bibinfo {author} {\bibfnamefont {I.~L.}\ \bibnamefont {Chuang}},\
  }\bibfield  {title} {\bibinfo {title} {{Fixed-Point Quantum Search with an
  Optimal Number of Queries}},\ }\href
  {https://doi.org/10.1103/PhysRevLett.113.210501} {\bibfield  {journal}
  {\bibinfo  {journal} {Physical Review Letters}\ }\textbf {\bibinfo {volume}
  {113}},\ \bibinfo {pages} {210501} (\bibinfo {year} {2014})}\BibitemShut
  {NoStop}%
\bibitem [{\citenamefont {Costa}\ \emph {et~al.}(2019)\citenamefont {Costa},
  \citenamefont {Jordan},\ and\ \citenamefont {Ostrander}}]{Costa2019}%
  \BibitemOpen
  \bibfield  {author} {\bibinfo {author} {\bibfnamefont {P.~C.~S.}\
  \bibnamefont {Costa}}, \bibinfo {author} {\bibfnamefont {S.}~\bibnamefont
  {Jordan}},\ and\ \bibinfo {author} {\bibfnamefont {A.}~\bibnamefont
  {Ostrander}},\ }\bibfield  {title} {\bibinfo {title} {{Quantum algorithm for
  simulating the wave equation}},\ }\href
  {https://doi.org/10.1103/PhysRevA.99.012323} {\bibfield  {journal} {\bibinfo
  {journal} {Physical Review A}\ }\textbf {\bibinfo {volume} {99}},\ \bibinfo
  {pages} {012323} (\bibinfo {year} {2019})},\ \Eprint
  {https://arxiv.org/abs/1711.05394} {arXiv:1711.05394} \BibitemShut {NoStop}%
\bibitem [{\citenamefont {Wilson}(1974)}]{Wilson1974}%
  \BibitemOpen
  \bibfield  {author} {\bibinfo {author} {\bibfnamefont {K.~G.}\ \bibnamefont
  {Wilson}},\ }\bibfield  {title} {\bibinfo {title} {{Confinement of quarks}},\
  }\href {https://doi.org/10.1103/PhysRevD.10.2445} {\bibfield  {journal}
  {\bibinfo  {journal} {Physical Review D}\ }\textbf {\bibinfo {volume} {10}},\
  \bibinfo {pages} {2445} (\bibinfo {year} {1974})}\BibitemShut {NoStop}%
\bibitem [{\citenamefont {Strocchi}(2004)}]{Strocchi2004}%
  \BibitemOpen
  \bibfield  {author} {\bibinfo {author} {\bibfnamefont {F.}~\bibnamefont
  {Strocchi}},\ }\bibfield  {title} {\bibinfo {title} {{Relativistic Quantum
  Mechanics and Field Theory}},\ }\href
  {https://doi.org/10.1023/B:FOOP.0000019625.30165.35} {\bibfield  {journal}
  {\bibinfo  {journal} {Foundations of Physics}\ }\textbf {\bibinfo {volume}
  {34}},\ \bibinfo {pages} {501} (\bibinfo {year} {2004})},\ \Eprint
  {https://arxiv.org/abs/0401143} {arXiv:0401143 [hep-th]} \BibitemShut
  {NoStop}%
\bibitem [{\citenamefont {Schwarz}\ \emph {et~al.}(2013)\citenamefont
  {Schwarz}, \citenamefont {Temme}, \citenamefont {Verstraete}, \citenamefont
  {Perez-Garcia},\ and\ \citenamefont {Cubitt}}]{Schwarz2012}%
  \BibitemOpen
  \bibfield  {author} {\bibinfo {author} {\bibfnamefont {M.}~\bibnamefont
  {Schwarz}}, \bibinfo {author} {\bibfnamefont {K.}~\bibnamefont {Temme}},
  \bibinfo {author} {\bibfnamefont {F.}~\bibnamefont {Verstraete}}, \bibinfo
  {author} {\bibfnamefont {D.}~\bibnamefont {Perez-Garcia}},\ and\ \bibinfo
  {author} {\bibfnamefont {T.~S.}\ \bibnamefont {Cubitt}},\ }\bibfield  {title}
  {\bibinfo {title} {{Preparing topological projected entangled pair states on
  a quantum computer}},\ }\href {https://doi.org/10.1103/PhysRevA.88.032321}
  {\bibfield  {journal} {\bibinfo  {journal} {Physical Review A - Atomic,
  Molecular, and Optical Physics}\ }\textbf {\bibinfo {volume} {88}},\ \bibinfo
  {pages} {1} (\bibinfo {year} {2013})},\ \Eprint
  {https://arxiv.org/abs/1211.4050} {arXiv:1211.4050} \BibitemShut {NoStop}%
\bibitem [{\citenamefont {Affleck}\ \emph {et~al.}(1987)\citenamefont
  {Affleck}, \citenamefont {Kennedy}, \citenamefont {Lieb}, \citenamefont
  {Tasaki}, \citenamefont {Affleck}, \citenamefont {Kennedy}, \citenamefont
  {Lieb}, \citenamefont {Tasaki}, \citenamefont {Affleck}, \citenamefont
  {Kennedy}, \citenamefont {Lieb},\ and\ \citenamefont
  {Tasaki}}]{Affleck1987a}%
  \BibitemOpen
  \bibfield  {author} {\bibinfo {author} {\bibfnamefont {I.}~\bibnamefont
  {Affleck}}, \bibinfo {author} {\bibfnamefont {T.}~\bibnamefont {Kennedy}},
  \bibinfo {author} {\bibfnamefont {E.~H.}\ \bibnamefont {Lieb}}, \bibinfo
  {author} {\bibfnamefont {H.}~\bibnamefont {Tasaki}}, \bibinfo {author}
  {\bibnamefont {Affleck}}, \bibinfo {author} {\bibnamefont {Kennedy}},
  \bibinfo {author} {\bibnamefont {Lieb}}, \bibinfo {author} {\bibnamefont
  {Tasaki}}, \bibinfo {author} {\bibfnamefont {I.}~\bibnamefont {Affleck}},
  \bibinfo {author} {\bibfnamefont {T.}~\bibnamefont {Kennedy}}, \bibinfo
  {author} {\bibfnamefont {E.~H.}\ \bibnamefont {Lieb}},\ and\ \bibinfo
  {author} {\bibfnamefont {H.}~\bibnamefont {Tasaki}},\ }\bibfield  {title}
  {\bibinfo {title} {{Rigorous results on valence-bond ground states in
  antiferromagnets}},\ }\href {https://doi.org/10.1103/PhysRevLett.59.799}
  {\bibfield  {journal} {\bibinfo  {journal} {Physical Review Letters}\
  }\textbf {\bibinfo {volume} {59}},\ \bibinfo {pages} {799} (\bibinfo {year}
  {1987})}\BibitemShut {NoStop}%
\bibitem [{\citenamefont {{De las Cuevas}}\ \emph {et~al.}(2018)\citenamefont
  {{De las Cuevas}}, \citenamefont {Schuch}, \citenamefont {Perez-Garcia},\
  and\ \citenamefont {Cirac}}]{DelasCuevas2018}%
  \BibitemOpen
  \bibfield  {author} {\bibinfo {author} {\bibfnamefont {G.}~\bibnamefont {{De
  las Cuevas}}}, \bibinfo {author} {\bibfnamefont {N.}~\bibnamefont {Schuch}},
  \bibinfo {author} {\bibfnamefont {D.}~\bibnamefont {Perez-Garcia}},\ and\
  \bibinfo {author} {\bibfnamefont {J.~I.}\ \bibnamefont {Cirac}},\ }\bibfield
  {title} {\bibinfo {title} {{Continuum limits of matrix product states}},\
  }\href {https://doi.org/10.1103/PhysRevB.98.174303} {\bibfield  {journal}
  {\bibinfo  {journal} {Physical Review B}\ }\textbf {\bibinfo {volume} {98}},\
  \bibinfo {pages} {174303} (\bibinfo {year} {2018})}\BibitemShut {NoStop}%
\bibitem [{\citenamefont {Nielsen}\ and\ \citenamefont
  {Ninomiya}(1981{\natexlab{a}})}]{Nielsen1981}%
  \BibitemOpen
  \bibfield  {author} {\bibinfo {author} {\bibfnamefont {H.}~\bibnamefont
  {Nielsen}}\ and\ \bibinfo {author} {\bibfnamefont {M.}~\bibnamefont
  {Ninomiya}},\ }\bibfield  {title} {\bibinfo {title} {{Absence of neutrinos on
  a lattice}},\ }\href {https://doi.org/10.1016/0550-3213(81)90361-8}
  {\bibfield  {journal} {\bibinfo  {journal} {Nuclear Physics B}\ }\textbf
  {\bibinfo {volume} {185}},\ \bibinfo {pages} {20} (\bibinfo {year}
  {1981}{\natexlab{a}})}\BibitemShut {NoStop}%
\bibitem [{\citenamefont {Nielsen}\ and\ \citenamefont
  {Ninomiya}(1981{\natexlab{b}})}]{Nielsen1981a}%
  \BibitemOpen
  \bibfield  {author} {\bibinfo {author} {\bibfnamefont {H.~B.}\ \bibnamefont
  {Nielsen}}\ and\ \bibinfo {author} {\bibfnamefont {M.}~\bibnamefont
  {Ninomiya}},\ }\bibfield  {title} {\bibinfo {title} {{A no-go theorem for
  regularizing chiral fermions}},\ }\href
  {https://doi.org/10.1016/0370-2693(81)91026-1} {\bibfield  {journal}
  {\bibinfo  {journal} {Physics Letters B}\ }\textbf {\bibinfo {volume}
  {105}},\ \bibinfo {pages} {219} (\bibinfo {year}
  {1981}{\natexlab{b}})}\BibitemShut {NoStop}%
\bibitem [{\citenamefont {Kogut}\ and\ \citenamefont
  {Susskind}(1975)}]{Kogut1975}%
  \BibitemOpen
  \bibfield  {author} {\bibinfo {author} {\bibfnamefont {J.~B.}\ \bibnamefont
  {Kogut}}\ and\ \bibinfo {author} {\bibfnamefont {L.}~\bibnamefont
  {Susskind}},\ }\bibfield  {title} {\bibinfo {title} {{Hamiltonian formulation
  of Wilson's lattice gauge theories}},\ }\href
  {https://doi.org/10.1103/PhysRevD.11.395} {\bibfield  {journal} {\bibinfo
  {journal} {Physical Review D}\ }\textbf {\bibinfo {volume} {11}},\ \bibinfo
  {pages} {395} (\bibinfo {year} {1975})}\BibitemShut {NoStop}%
\bibitem [{\citenamefont {Banks}\ \emph {et~al.}(1976)\citenamefont {Banks},
  \citenamefont {Susskind},\ and\ \citenamefont {Kogut}}]{Banks1976}%
  \BibitemOpen
  \bibfield  {author} {\bibinfo {author} {\bibfnamefont {T.}~\bibnamefont
  {Banks}}, \bibinfo {author} {\bibfnamefont {L.}~\bibnamefont {Susskind}},\
  and\ \bibinfo {author} {\bibfnamefont {J.}~\bibnamefont {Kogut}},\ }\bibfield
   {title} {\bibinfo {title} {{Strong-coupling calculations of lattice gauge
  theories: (1 + 1)-dimensional exercises}},\ }\href
  {https://doi.org/10.1103/PhysRevD.13.1043} {\bibfield  {journal} {\bibinfo
  {journal} {Physical Review D}\ }\textbf {\bibinfo {volume} {13}},\ \bibinfo
  {pages} {1043} (\bibinfo {year} {1976})}\BibitemShut {NoStop}%
\bibitem [{\citenamefont {Susskind}(1977)}]{Susskind1977}%
  \BibitemOpen
  \bibfield  {author} {\bibinfo {author} {\bibfnamefont {L.}~\bibnamefont
  {Susskind}},\ }\bibfield  {title} {\bibinfo {title} {{Lattice fermions}},\
  }\href {https://doi.org/10.1103/PhysRevD.16.3031} {\bibfield  {journal}
  {\bibinfo  {journal} {Physical Review D}\ }\textbf {\bibinfo {volume} {16}},\
  \bibinfo {pages} {3031} (\bibinfo {year} {1977})}\BibitemShut {NoStop}%
\bibitem [{\citenamefont {Kaplan}(1992)}]{Kaplan1992}%
  \BibitemOpen
  \bibfield  {author} {\bibinfo {author} {\bibfnamefont {D.~B.}\ \bibnamefont
  {Kaplan}},\ }\bibfield  {title} {\bibinfo {title} {{A method for simulating
  chiral fermions on the lattice}},\ }\href
  {https://doi.org/10.1016/0370-2693(92)91112-M} {\bibfield  {journal}
  {\bibinfo  {journal} {Physics Letters B}\ }\textbf {\bibinfo {volume}
  {288}},\ \bibinfo {pages} {342} (\bibinfo {year} {1992})}\BibitemShut
  {NoStop}%
\bibitem [{\citenamefont {Haldane}(1982)}]{Haldane1982}%
  \BibitemOpen
  \bibfield  {author} {\bibinfo {author} {\bibfnamefont {F.~D.~M.}\
  \bibnamefont {Haldane}},\ }\bibfield  {title} {\bibinfo {title} {{Quantum
  field ground state of the sine-Gordon model with finite soliton density:
  exact results}},\ }\href {https://doi.org/10.1088/0305-4470/15/2/021}
  {\bibfield  {journal} {\bibinfo  {journal} {Journal of Physics A:
  Mathematical and General}\ }\textbf {\bibinfo {volume} {15}},\ \bibinfo
  {pages} {507} (\bibinfo {year} {1982})}\BibitemShut {NoStop}%
\bibitem [{\citenamefont {Mandelstam}(1975)}]{Mandelstam1975}%
  \BibitemOpen
  \bibfield  {author} {\bibinfo {author} {\bibfnamefont {S.}~\bibnamefont
  {Mandelstam}},\ }\bibfield  {title} {\bibinfo {title} {{Soliton operators for
  the quantized sine-Gordon equation}},\ }\href
  {https://doi.org/10.1103/PhysRevD.11.3026} {\bibfield  {journal} {\bibinfo
  {journal} {Physical Review D}\ }\textbf {\bibinfo {volume} {11}},\ \bibinfo
  {pages} {3026} (\bibinfo {year} {1975})}\BibitemShut {NoStop}%
\bibitem [{\citenamefont {Coleman}(1975)}]{Coleman1975}%
  \BibitemOpen
  \bibfield  {author} {\bibinfo {author} {\bibfnamefont {S.}~\bibnamefont
  {Coleman}},\ }\bibfield  {title} {\bibinfo {title} {{Quantum sine-Gordon
  equation as the massive thirring model}},\ }\href
  {https://doi.org/10.1103/PhysRevD.11.2088} {\bibfield  {journal} {\bibinfo
  {journal} {Physical Review D}\ }\textbf {\bibinfo {volume} {11}},\ \bibinfo
  {pages} {2088} (\bibinfo {year} {1975})}\BibitemShut {NoStop}%
\bibitem [{\citenamefont {Korepin}(1979)}]{Korepin1979}%
  \BibitemOpen
  \bibfield  {author} {\bibinfo {author} {\bibfnamefont {V.~E.}\ \bibnamefont
  {Korepin}},\ }\bibfield  {title} {\bibinfo {title} {{Direct calculation of
  the S matrix in the massive thirring model}},\ }\href
  {https://doi.org/10.1007/BF01028501} {\bibfield  {journal} {\bibinfo
  {journal} {Theoretical and Mathematical Physics}\ }\textbf {\bibinfo {volume}
  {41}},\ \bibinfo {pages} {953} (\bibinfo {year} {1979})}\BibitemShut
  {NoStop}%
\bibitem [{\citenamefont {Okwamoto}(1983)}]{Okwamoto1983}%
  \BibitemOpen
  \bibfield  {author} {\bibinfo {author} {\bibfnamefont {Y.}~\bibnamefont
  {Okwamoto}},\ }\bibfield  {title} {\bibinfo {title}
  {{Commensurate-Incommensurate Transition in the Two-Dimensional Sine-Gordon
  System: the Bethe Ansatz Calculation}},\ }\href
  {https://doi.org/10.1143/JPSJ.52.942} {\bibfield  {journal} {\bibinfo
  {journal} {Journal of the Physical Society of Japan}\ }\textbf {\bibinfo
  {volume} {52}},\ \bibinfo {pages} {942} (\bibinfo {year} {1983})}\BibitemShut
  {NoStop}%
\bibitem [{\citenamefont {Zamolodchikov}\ and\ \citenamefont
  {Zamolodchikov}(1979)}]{Zamolodchikov1979}%
  \BibitemOpen
  \bibfield  {author} {\bibinfo {author} {\bibfnamefont {A.~B.}\ \bibnamefont
  {Zamolodchikov}}\ and\ \bibinfo {author} {\bibfnamefont {A.~B.}\ \bibnamefont
  {Zamolodchikov}},\ }\bibfield  {title} {\bibinfo {title} {{Factorized
  S-matrices in two dimensions as the exact solutions of certain relativistic
  quantum field theory models}},\ }\href
  {https://doi.org/10.1016/0003-4916(79)90391-9} {\bibfield  {journal}
  {\bibinfo  {journal} {Annals of Physics}\ }\textbf {\bibinfo {volume}
  {120}},\ \bibinfo {pages} {253} (\bibinfo {year} {1979})}\BibitemShut
  {NoStop}%
\bibitem [{\citenamefont {Schollw{\"{o}}ck}(2011)}]{Schollwock2011a}%
  \BibitemOpen
  \bibfield  {author} {\bibinfo {author} {\bibfnamefont {U.}~\bibnamefont
  {Schollw{\"{o}}ck}},\ }\bibfield  {title} {\bibinfo {title} {{The
  density-matrix renormalization group in the age of matrix product states}},\
  }\href {https://doi.org/10.1016/j.aop.2010.09.012} {\bibfield  {journal}
  {\bibinfo  {journal} {Annals of Physics}\ }\textbf {\bibinfo {volume}
  {326}},\ \bibinfo {pages} {96} (\bibinfo {year} {2011})},\ \Eprint
  {https://arxiv.org/abs/1008.3477} {arXiv:1008.3477} \BibitemShut {NoStop}%
\bibitem [{\citenamefont {Bezanson}\ \emph {et~al.}(2017)\citenamefont
  {Bezanson}, \citenamefont {Edelman}, \citenamefont {Karpinski},\ and\
  \citenamefont {Shah}}]{Bezanson2017}%
  \BibitemOpen
  \bibfield  {author} {\bibinfo {author} {\bibfnamefont {J.}~\bibnamefont
  {Bezanson}}, \bibinfo {author} {\bibfnamefont {A.}~\bibnamefont {Edelman}},
  \bibinfo {author} {\bibfnamefont {S.}~\bibnamefont {Karpinski}},\ and\
  \bibinfo {author} {\bibfnamefont {V.~B.}\ \bibnamefont {Shah}},\ }\bibfield
  {title} {\bibinfo {title} {{Julia: A Fresh Approach to Numerical
  Computing}},\ }\href {https://doi.org/10.1137/141000671} {\bibfield
  {journal} {\bibinfo  {journal} {SIAM Review}\ }\textbf {\bibinfo {volume}
  {59}},\ \bibinfo {pages} {65} (\bibinfo {year} {2017})},\ \Eprint
  {https://arxiv.org/abs/1411.1607} {arXiv:1411.1607} \BibitemShut {NoStop}%
\bibitem [{\citenamefont {{Hamed Moosavian}}(2019)}]{HamedMoosavian2019}%
  \BibitemOpen
  \bibfield  {author} {\bibinfo {author} {\bibfnamefont {A.}~\bibnamefont
  {{Hamed Moosavian}}},\ }\href {https://doi.org/10.5281/zenodo.3066478}
  {\bibinfo {title} {{A DMRG implementation written in Julia}}} (\bibinfo
  {year} {2019})\BibitemShut {NoStop}%
\bibitem [{\citenamefont {Hastings}\ and\ \citenamefont
  {Koma}(2005)}]{Hastings2005}%
  \BibitemOpen
  \bibfield  {author} {\bibinfo {author} {\bibfnamefont {M.~B.}\ \bibnamefont
  {Hastings}}\ and\ \bibinfo {author} {\bibfnamefont {T.}~\bibnamefont
  {Koma}},\ }\bibfield  {title} {\bibinfo {title} {{Spectral Gap and
  Exponential Decay of Correlations}},\ }\href
  {https://doi.org/10.1007/s00220-006-0030-4} {\bibfield  {journal} {\bibinfo
  {journal} {Communications in Mathematical Physics}\ }\textbf {\bibinfo
  {volume} {265}},\ \bibinfo {pages} {781} (\bibinfo {year} {2005})},\ \Eprint
  {https://arxiv.org/abs/0507008} {arXiv:0507008 [math-ph]} \BibitemShut
  {NoStop}%
\bibitem [{\citenamefont {Jordan}\ and\ \citenamefont
  {Wigner}(1928)}]{Jordan1928}%
  \BibitemOpen
  \bibfield  {author} {\bibinfo {author} {\bibfnamefont {P.}~\bibnamefont
  {Jordan}}\ and\ \bibinfo {author} {\bibfnamefont {E.}~\bibnamefont
  {Wigner}},\ }\bibfield  {title} {\bibinfo {title} {{{\"{U}}ber das Paulische
  {\"{A}}quivalenzverbot}},\ }\href {https://doi.org/10.1007/BF01331938}
  {\bibfield  {journal} {\bibinfo  {journal} {Zeitschrift f{\"{u}}r Physik A
  Hadrons and Nuclei}\ }\textbf {\bibinfo {volume} {47}},\ \bibinfo {pages}
  {631} (\bibinfo {year} {1928})}\BibitemShut {NoStop}%
\bibitem [{\citenamefont {Batista}\ and\ \citenamefont
  {Ortiz}(2000)}]{Batista2000}%
  \BibitemOpen
  \bibfield  {author} {\bibinfo {author} {\bibfnamefont {C.~D.}\ \bibnamefont
  {Batista}}\ and\ \bibinfo {author} {\bibfnamefont {G.}~\bibnamefont
  {Ortiz}},\ }\bibfield  {title} {\bibinfo {title} {{Generalized Jordan-Wigner
  Transformations}},\ }\href {https://doi.org/10.1103/PhysRevLett.86.1082}
  {\bibfield  {journal} {\bibinfo  {journal} {Physical Review Letters}\
  }\textbf {\bibinfo {volume} {86}},\ \bibinfo {pages} {1082} (\bibinfo {year}
  {2000})},\ \Eprint {https://arxiv.org/abs/0008374} {arXiv:0008374 [cond-mat]}
  \BibitemShut {NoStop}%
\bibitem [{Note2()}]{Note2}%
  \BibitemOpen
  \bibinfo {note} {Note that in most cases, especially systems that have
  conserved quantum numbers, this would not yield the maximum inner product.
  For example, in our case of the massive Gross-Neveu model, by choosing the
  extra site to be in the $(\mathinner {|{01}\delimiter "526930B }+\mathinner
  {|{10}\delimiter "526930B })/ \protect \sqrt 2$ state, one can improve the
  inner products by a factor of $\protect \sqrt 2$. Choosing the best state for
  the extra site requires some knowledge about the symmetries of the system,
  which in many cases can be difficult to know before doing deep analysis of
  the system. Therefore, to be conservative, we decided to use the more generic
  uniform superposition for the added site.}\BibitemShut {Stop}%
\bibitem [{\citenamefont {Bordag}\ \emph {et~al.}(2001)\citenamefont {Bordag},
  \citenamefont {Mohideen},\ and\ \citenamefont {Mostepanenko}}]{Bordag2001}%
  \BibitemOpen
  \bibfield  {author} {\bibinfo {author} {\bibfnamefont {M.}~\bibnamefont
  {Bordag}}, \bibinfo {author} {\bibfnamefont {U.}~\bibnamefont {Mohideen}},\
  and\ \bibinfo {author} {\bibfnamefont {V.}~\bibnamefont {Mostepanenko}},\
  }\bibfield  {title} {\bibinfo {title} {{New developments in the Casimir
  effect}},\ }\href {https://doi.org/10.1016/S0370-1573(01)00015-1} {\bibfield
  {journal} {\bibinfo  {journal} {Physics Reports}\ }\textbf {\bibinfo {volume}
  {353}},\ \bibinfo {pages} {1} (\bibinfo {year} {2001})},\ \Eprint
  {https://arxiv.org/abs/0106045} {arXiv:0106045 [quant-ph]} \BibitemShut
  {NoStop}%
\bibitem [{\citenamefont {Farina}(2006)}]{Farina2006}%
  \BibitemOpen
  \bibfield  {author} {\bibinfo {author} {\bibfnamefont {C.}~\bibnamefont
  {Farina}},\ }\bibfield  {title} {\bibinfo {title} {{The Casimir effect: some
  aspects}},\ }\href {https://doi.org/10.1590/S0103-97332006000700006}
  {\bibfield  {journal} {\bibinfo  {journal} {Brazilian Journal of Physics}\
  }\textbf {\bibinfo {volume} {36}},\ \bibinfo {pages} {1137} (\bibinfo {year}
  {2006})}\BibitemShut {NoStop}%
\bibitem [{\citenamefont {Towns}\ \emph {et~al.}(2014)\citenamefont {Towns},
  \citenamefont {Cockerill}, \citenamefont {Dahan}, \citenamefont {Foster},
  \citenamefont {Gaither}, \citenamefont {Grimshaw}, \citenamefont {Hazlewood},
  \citenamefont {Lathrop}, \citenamefont {Lifka}, \citenamefont {Peterson},
  \citenamefont {Roskies}, \citenamefont {Scott},\ and\ \citenamefont
  {Wilkins-Diehr}}]{xsede}%
  \BibitemOpen
  \bibfield  {author} {\bibinfo {author} {\bibfnamefont {J.}~\bibnamefont
  {Towns}}, \bibinfo {author} {\bibfnamefont {T.}~\bibnamefont {Cockerill}},
  \bibinfo {author} {\bibfnamefont {M.}~\bibnamefont {Dahan}}, \bibinfo
  {author} {\bibfnamefont {I.}~\bibnamefont {Foster}}, \bibinfo {author}
  {\bibfnamefont {K.}~\bibnamefont {Gaither}}, \bibinfo {author} {\bibfnamefont
  {A.}~\bibnamefont {Grimshaw}}, \bibinfo {author} {\bibfnamefont
  {V.}~\bibnamefont {Hazlewood}}, \bibinfo {author} {\bibfnamefont
  {S.}~\bibnamefont {Lathrop}}, \bibinfo {author} {\bibfnamefont
  {D.}~\bibnamefont {Lifka}}, \bibinfo {author} {\bibfnamefont {G.~D.}\
  \bibnamefont {Peterson}}, \bibinfo {author} {\bibfnamefont {R.}~\bibnamefont
  {Roskies}}, \bibinfo {author} {\bibfnamefont {J.~R.}\ \bibnamefont {Scott}},\
  and\ \bibinfo {author} {\bibfnamefont {N.}~\bibnamefont {Wilkins-Diehr}},\
  }\bibfield  {title} {\bibinfo {title} {{XSEDE: Accelerating Scientific
  Discovery}},\ }\href {https://doi.org/10.1109/MCSE.2014.80} {\bibfield
  {journal} {\bibinfo  {journal} {Comput. Sci. Eng.}\ }\textbf {\bibinfo
  {volume} {16}},\ \bibinfo {pages} {62} (\bibinfo {year} {2014})}\BibitemShut
  {NoStop}%
\bibitem [{\citenamefont {Nystrom}\ \emph {et~al.}(2015)\citenamefont
  {Nystrom}, \citenamefont {Levine}, \citenamefont {Roskies},\ and\
  \citenamefont {Scott}}]{bridges}%
  \BibitemOpen
  \bibfield  {author} {\bibinfo {author} {\bibfnamefont {N.~A.}\ \bibnamefont
  {Nystrom}}, \bibinfo {author} {\bibfnamefont {M.~J.}\ \bibnamefont {Levine}},
  \bibinfo {author} {\bibfnamefont {R.~Z.}\ \bibnamefont {Roskies}},\ and\
  \bibinfo {author} {\bibfnamefont {J.~R.}\ \bibnamefont {Scott}},\ }\bibfield
  {title} {\bibinfo {title} {{Bridges: A Uniquely Flexible HPC Resource for New
  Communities and Data Analytics}},\ }in\ \href
  {https://doi.org/10.1145/2792745.2792775} {\emph {\bibinfo {booktitle}
  {Proceedings of the 2015 XSEDE Conference: Scientific Advancements Enabled by
  Enhanced Cyberinfrastructure}}},\ \bibinfo {series and number} {XSEDE '15}\
  (\bibinfo  {publisher} {ACM},\ \bibinfo {address} {New York, NY, USA},\
  \bibinfo {year} {2015})\ pp.\ \bibinfo {pages} {30:1--30:8}\BibitemShut
  {NoStop}%
\bibitem [{\citenamefont {Schuch}\ \emph {et~al.}(2011)\citenamefont {Schuch},
  \citenamefont {P{\'{e}}rez-Garc{\'{i}}a},\ and\ \citenamefont
  {Cirac}}]{Schuch2011}%
  \BibitemOpen
  \bibfield  {author} {\bibinfo {author} {\bibfnamefont {N.}~\bibnamefont
  {Schuch}}, \bibinfo {author} {\bibfnamefont {D.}~\bibnamefont
  {P{\'{e}}rez-Garc{\'{i}}a}},\ and\ \bibinfo {author} {\bibfnamefont
  {I.}~\bibnamefont {Cirac}},\ }\bibfield  {title} {\bibinfo {title}
  {{Classifying quantum phases using matrix product states and projected
  entangled pair states}},\ }\href {https://doi.org/10.1103/PhysRevB.84.165139}
  {\bibfield  {journal} {\bibinfo  {journal} {Physical Review B}\ }\textbf
  {\bibinfo {volume} {84}},\ \bibinfo {pages} {165139} (\bibinfo {year}
  {2011})},\ \Eprint {https://arxiv.org/abs/1010.3732} {arXiv:1010.3732}
  \BibitemShut {NoStop}%
\end{thebibliography}%
\end{document}